\pgfplotsset{width=7cm}
\newcommand{\reals}{\ensuremath{\mathbb{R}}}
\newcommand{\posreals}{\ensuremath{\reals_{\ge 0}}}
\DeclareMathOperator*{\probOp}{\mathrm{Pr}}
\DeclareMathOperator*{\expectOp}{\mathbb{E}}
\DeclareMathOperator{\restrict}{\mid}
\newcommand{\expect}[2][]{\expectOp_{#1}\left[#2\right]}
\newcommand{\prob}[2][]{\probOp_{#1}\left[#2\right]}
 \newtheorem{theorem}{Theorem}[section]
 \newtheorem{lemma}[theorem]{Lemma}
 \newtheorem{corollary}[theorem]{Corollary}
 \theoremstyle{definition}
\author{Maxim Sviridenko\thanks{Yahoo!\ Labs, New York, NY, USA.  \texttt{sviri@yahoo-inc.com}} \and Jan Vondr\'ak\thanks{IBM Almaden Research Center, San Jose, CA, USA.  \texttt{jvondrak@us.ibm.com}} \and Justin Ward\thanks{Department of Computer Science, University of Warwick, Coventry, United Kingdom. \texttt{J.D.Ward@warwick.ac.uk}\enspace Work supported by EPSRC grant EP/J021814/1.}}
\title{Optimal approximation for submodular and supermodular optimization with bounded curvature}
\begin{document}
\maketitle

\pagenumbering{arabic}
\setcounter{page}{1}

\begin{abstract}
We design new approximation algorithms for the problems of optimizing submodular and supermodular functions subject to a single matroid constraint.  Specifically, we consider the case in which we wish to maximize a nondecreasing submodular function or minimize a nonincreasing supermodular function in the setting of bounded total curvature $c$. In the case of submodular maximization with curvature $c$, we obtain a $(1-c/e)$-approximation --- the first improvement over the greedy $(1-e^{-c})/c$-approximation of Conforti and Cornuejols from 1984, which holds for a cardinality constraint, as well as recent approaches that hold for an arbitrary matroid constraint.  

Our approach is based on modifications of the continuous greedy algorithm and non-oblivious local search,  and allows us to approximately maximize the sum of a nonnegative, nondecreasing submodular function and a (possibly negative) linear function.  We show how to reduce both submodular maximization and supermodular minimization to this general problem when the objective function has bounded total curvature.
We prove that the approximation results we obtain are the best possible in the value oracle model, even in the case of a cardinality constraint. 

We define an extension of the notion of curvature to general monotone set functions and show $(1-c)$-approximation for maximization and $1/(1-c)$-approximation for minimization cases.   Finally, we give two concrete applications of our results in the settings of maximum entropy sampling, and the column-subset selection problem.
\end{abstract}

\newcommand{\submod}{g}
\newcommand{\lin}{\ell}
\newcommand{\obj}{f}
\newcommand{\submodExt}{G}
\newcommand{\linExt}{L}
\newcommand{\objExt}{F}
\newcommand{\ground}{X}
\newcommand{\I}{\mathcal{I}}
\newcommand{\M}{\mathcal{M}}
\newcommand{\bases}[1]{\mathcal{B}(#1)}
\newcommand{\conv}{\mathrm{conv}}
\newcommand{\rank}[1]{r_{#1}}
\newcommand{\chr}[1]{\mathbf{1}_{#1}}
\newcommand{\cR}{R}

\newcommand{\linmax}{\hat{v}_\lin}
\newcommand{\submodmax}{\hat{v}_\submod}
\newcommand{\vmax}{\hat{v}}
\newcommand{\emax}{\hat{e}}
\newcommand{\elinmax}{\hat{e}_\lin}
\newcommand{\esubmodmax}{\hat{e}_\submod}

\newcommand{\NOsubmod}{h}
\newcommand{\NOsubmodEst}{\tilde{\NOsubmod}}
\newcommand{\pot}{\psi}
\newcommand{\potEst}{\tilde{\pot}}
\newcommand{\sinit}{S_{0}}

\newcommand{\comp}[1]{\overline{#1}}
\newcommand{\compf}{\comp{f}}
\newcommand{\Mdual}{\M^{*}}
\newcommand{\Bdual}{\B^{*}}
\newcommand{\Odual}{O^{*}}
\newcommand{\Idual}{\I^{*}}
\newcommand{\Sdual}{S^{*}}
\newcommand{\fdual}{f^{*}}
\newcommand{\cinv}{(1 - c)^{-1}}
\newcommand{\NOLS}{\textsc{NOLS}}
\newcommand{\SubmodMax}{\textsc{SubmodMax}}
\newcommand{\SupmodMin}{\textsc{SupmodMin}}

\newcommand{\be}{\mathbf e}
\newcommand{\bc}{\mathbf c}
\newcommand{\bx}{\mathbf x}
\newcommand{\by}{\mathbf y}
\newcommand{\bv}{\mathbf v}
\newcommand{\proj}{\mathsf{proj}}
\newcommand{\dist}{\mathsf{dist}}
\newcommand{\spn}{\mathsf{span}}

\section{Introduction}
\label{sec:introduction}

The problem of maximizing a submodular function subject to various constraints is a meta-problem that appears in various settings, from combinatorial auctions \cite{LLN06,DNS05,Vondrak2008} and viral marketing in social networks \cite{KKT03} to optimal sensor placement in machine learning \cite{KGGK06,KSG08,KRGG09,KG11}. A classic result by Nemhauser, Wolsey and Fisher \cite{Nemhauser1978a} is that the greedy algorithm provides a $(1-1/e)$-approximation for maximizing a nondecreasing submodular function subject to a cardinality constraint. The factor of $1-1/e$ cannot be improved, under the assumption that the algorithm queries the objective function a polynomial number of times \cite{Nemhauser1978}. 

The greedy algorithm has been applied in numerous settings in practice. Although it is useful to know that it never performs worse than $1-1/e$ compared to the optimum, in practice its performance is often even better than this, in fact very close to the optimum. To get a quantitative handle on this phenomenon, various assumptions can be made about the input. One such assumption is the notion of {\em curvature}, introduced by Conforti and Cornu\'ejols \cite{Conforti1984}: A function $f:2^X \rightarrow \reals_+$ has curvature $c \in [0,1]$, if $f(S+j) - f(S)$ does not change by a factor larger than $1-c$ when varying $S$. A function with $c=0$ is linear, so the parameter measures in some sense how far $f$ is from linear. It was shown in \cite{Conforti1984} that the greedy algorithm for nondecreasing submodular functions provides a $(1-e^{-c})/c$-approximation, which tends to $1$ as $c \rightarrow 0$.

Recently, various applications have motivated the study of submodular optimization under various more general constraints. In particular, the $(1-1/e)$-approximation under a cardinality constraint has been generalized to any matroid constraint in \cite{Calinescu2011}. This captures various applications such as welfare maximization in combinatorial auctions \cite{Vondrak2008}, generalized assignment problems \cite{Calinescu2007} and variants of sensor placement \cite{KRGG09}. Assuming curvature $c$, \cite{Vondrak2010} generalized the $(1-e^{-c})/c$-approximation of \cite{Conforti1984} to any matroid constraint, and hypothesized that this is the optimal approximation factor. It was proved in \cite{Vondrak2010} that this factor is indeed optimal for instances of curvature $c$ {\em with respect to the optimum} (a technical variation of the definition, which depends on how values change when measured on top of the optimal solution). 
In the following, we use {\em total curvature} to refer to the original definition of \cite{Conforti1984}, to distinguish from curvature w.r.t.~the optimum \cite{Vondrak2010}.

\subsection{Our Contribution}
\label{sec:our-contribution}

Our main result is that given total curvature $c \in [0,1]$, the $\frac{1-e^{-c}}{c}$-approximation of Conforti and Cornu\'ejols for monotone submodular maximization subject to a cardinality constraint \cite{Conforti1984} is suboptimal and can be improved to a $(1 - c/e - O(\epsilon))$-approximation. We prove that this guarantee holds for the maximization of a nondecreasing submodular function subject to any matroid constraint, thus improving the result of \cite{Vondrak2010} as well. We give two techniques that achieve this result: a modification of the continuous greedy algorithm of \cite{Calinescu2011}, and a variant of the local search algorithm of \cite{Filmus2014}. 

Using the same techniques, we obtain an approximation factor of $1 + \frac{c}{1-c}e^{-1} + \frac{1}{1-c}O(\epsilon)$ for minimizing a nonincreasing supermodular function subject to a matroid constraint. Our approximation guarantees are strictly better than existing algorithms for every value of $c$ except $c = 0$ and $c = 1$.  The relevant ratios are plotted in Figures \ref{fig:submod} and \ref{fig:supmod}.
In the case of minimization, we have also plotted the inverse approximation ratio to aid in comparison.
\begin{figure}
\begin{center}
\begin{tikzpicture}
  \begin{axis}[xmin=0,ymin=0,domain=0.001:1, samples=100, smooth, no markers,
    enlargelimits=false, xlabel=$c$, ylabel=${f(S)/f(O)}$, legend pos=south west]
 \addplot gnuplot[black, dashed, id=jan]{(1 - exp(-x))/x};
\addlegendentry{Previous \cite{Vondrak2010}}
 \addplot gnuplot[black, id=us1]{1 - x*exp(-1)};
\addlegendentry{This Paper}
 \end{axis}
 \end{tikzpicture}
\end{center}
\caption{Comparison of Approximation Ratios for Submodular Maximization}
\label{fig:submod}
\end{figure}
\begin{figure*}
\begin{center}
\begin{tikzpicture}
  \begin{axis}[xmin=0.0,xmax=1.0,ymin=1.0,ymax=10, samples=100, smooth, no markers,
    enlargelimits=false, xlabel=$c$, ylabel=${f(S)/f(O)}$,
    legend columns=1,
    legend entries={Previous \cite{Ilev2001}, This Paper},
    legend to name=named,
    title={Approximation Ratio}]
]
 \addplot gnuplot[black,dashed, domain=0.00:0.9,id=ilev]{(exp(x/(1-x)) - 1)/(x/(1-x))};
 \addplot gnuplot[black,domain=0.00:0.99,id=us2]{1 + exp(-1)*x/(1-x)};
 \end{axis}
\end{tikzpicture}
\hspace{1cm}
\begin{tikzpicture}
  \begin{axis}[xmin=0.0,xmax=1.0,ymin=0.0,ymax=1, samples=1000, smooth, no markers,
    enlargelimits=false, xlabel=$c$, ylabel=${f(O)/f(S)}$, title={Inverse Approximation Ratio}]
 \addplot gnuplot[black,dashed, domain=0.001:1.0,id=ilevinv]{1/((exp(x/(1-x)) - 1)/(x/(1-x)))};
 \addplot gnuplot[black,domain=0.00:0.999,id=us2inv]{1/(1 + exp(-1)*x/(1-x))};
 \end{axis}
 \end{tikzpicture}
\ref{named}
\caption{Comparison of Approximation Ratios for Supermodular Minimization}
\label{fig:supmod}
\end{center}
\end{figure*}


We also derive complementary negative results, showing that no algorithm that evaluates $f$ on only a polynomial number of sets can have approximation performance better than the algorithms we give.  Thus, we resolve the question of optimal approximation as a function of curvature in both the submodular and supermodular case.



Further, we show that the assumption of bounded curvature alone is sufficient to achieve certain approximations, even without assuming submodularity or supermodularity. Specifically, there is a (simple) algorithm that achieves a $(1-c)$-approximation for the maximization of any nondecreasing function of total curvature at most $c$, subject to a matroid constraint. (In contrast, we achieve a $(1-c/e-O(\epsilon))$-approximation with the additional assumption of submodularity.) Also, there is a $\frac{1}{1-c}$-approximation for the minimization of any  nonincreasing function of total curvature at most $c$ subject to a matroid constraint, compared with a $(1 + \frac{c}{1-c}  e^{-1} + \frac{1}{1-c} O(\epsilon))$-approximation for supermodular functions.

\subsection{Applications}
We provide two applications of our results.  In the first application, we are given a positive semidefinite matrix $M$. Let $M[S,S]$ be a principal minor defined by the columns and rows indexed by the set $S\subseteq \{1,\dots,n\}$. In the maximum entropy sampling problem (or more precisely in the generalization of that problem) we would like to find a set $|S|=k$ maximizing $f(S)=\ln \det M[S,S]$. It is well-known that this set function $f(S)$ is submodular \cite{Kelmans1983} (many earlier and alternative proofs of that fact are known). In addition, we know that solving this problem exactly is NP-hard \cite{Lee1996} (see also Lee \cite{Lee2002} for a survey on known optimization techniques for the problem).

We consider the maximum entropy sampling problem when the matrix $M$ has eigenvalues $1\le \lambda_1\le \dots\le \lambda_n$. Since the determinant of any matrix is just a product of its eigenvalues, the Cauchy Interlacing Theorem implies that the  the submodular function $f(S)=\ln \det M[S,S]$ is nondecreasing. In addition we can easily derive a bound on its curvature $c\le 1-1/\lambda_n$ (see the formal definition of curvature in Section \ref{sec:related-work}).  This immediately implies that our new algorithms for submodular maximization have an approximation guarantee of $1 - \left(1-\frac{1}{\lambda_n}\right) \frac{1}{e}$ for the maximum entropy sampling problem.

Our second application is the Column-Subset Selection Problem  arising in various machine learning settings. The goal is, given a matrix $A \in \reals^{m \times n}$, to select a subset of $k$ columns such that the matrix is well-approximated (say in squared Frobenius norm) by a matrix whose columns are in the span of the selected $k$ columns. This is a variant of {\em feature selection}, since the rows might correspond to examples and the columns to features. The problem is to select a subset of $k$ features such that the remaining features can be approximated by linear combinations of the selected features. This is related but not identical to Principal Component Analysis (PCA) where we want to select a subspace of rank $k$ (not necessarily generated by a subset of columns) such that the matrix is well approximated by its projection to this subspace. While PCA can be solved optimally by spectral methods, the Column-Subset Selection Problem is less well understood. Here we take the point of view of approximation algorithms: given a matrix $A$, we want to find a subset of $k$ columns such that the squared Frobenius distance of $A$ from its projection on the span of these $k$ columns is minimized.  To the best of our knowledge, this problem is not known to be NP-hard; on the other hand, the approximation factors of known algorithms are quite large.   The best known algorithm for the problem as stated is a $(k+1)$-approximation algorithm given by Deshpande and Rademacher \cite{Deshpande2010}.  For the related problem in which we may select any set of $r \ge k$ columns that form a rank $k$ submatrix of $A$, Deshpande and Vempala \cite{Deshpande2006} showed that there exist matrices for which $\Omega(k/\epsilon)$ columns must be chosen to obtain a $(1 + \epsilon)$-approximation.  Boutsidis et al.\ \cite{Boutsidis2014} give a matching algorithm, which obtains a set of $O(k/\epsilon)$ columns that give a $(1 + \epsilon)$ approximation.  We refer the reader to \cite{Boutsidis2014} for further background on the history of this and related problems.  

Here, we return to the setting in which only $k$ columns of $A$ may be chosen and show that this is a special case of nonincreasing function minimization with bounded curvature.  We show a relationship between curvature and the condition number $\kappa$ of $A$, which allows us to obtain approximation factor of $\kappa^2$. We define the problem and the related notions more precisely in Section \ref{application}.

\subsection{Related Work}
\label{sec:related-work}

The problem of maximizing a nondecreasing submodular function subject to a cardinality constraint (i.e., a uniform matroid) was studied by Nemhauser, Wolsey, and Fisher \cite{Nemhauser1978a}, who showed that the standard greedy algorithm gives a $(1 - e^{-1})$-approximation.  However, in \cite{Fisher1978}, they show that the greedy algorithm is only $1/2$-approximation for maximizing a nondecreasing submodular function subject to an arbitrary matroid constraint.  More recently, Calinescu et al. \cite{Calinescu2011} obtained a $(1 - e^{-1})$ approximation for an arbitrary matroid constraint.  In their approach, the \emph{continuous greedy algorithm} first maximizes approximately a multilinear extension of the given submodular function and then applies a \emph{pipage rounding} technique inspired by \cite{Ageev2004} to obtain an integral solution.  The running time of this algorithm is dominated by the pipage rounding phase.  Chekuri, Vondr\'{a}k, and Zenklusen \cite{Chekuri2010} later showed that pipage rounding can be replaced by an alternative rounding procedure called \emph{swap rounding} based on the exchange properties of the underlying constraint.  In later work \cite{Chekuri2011,Chekuri2011a}, they developed the notion of a \emph{contention resolution scheme}, which gives a unified treatment for a variety of constraints, and allows rounding approaches for the continuous greedy algorithm to be composed in order to solve submodular maximization problems under combinations of constraints.  Later, Filmus and Ward \cite{Filmus2012} obtained a $(1 - e^{-1})$-approximation for submodular maximization in an arbitrary matroid by using a non-oblivious local search algorithm that does not require rounding.

On the negative side, Nemhauser and Wolsey \cite{Nemhauser1978} showed that it is impossible to improve upon the bound of $(1 - e^{-1})$ in the value oracle model, even under a single cardinality constraint.  In this model, $f$ is given as a value oracle and an algorithm can evaluate $f$ on only a polynomial number of sets.  Feige \cite{Feige1998} showeds that $(1 - e^{-1})$ is the best possible approximation even when the function is given explicitly, unless $P = NP$.  In later work, Vondr\'{a}k \cite{Vondrak2009} introduced the notion of the \emph{symmetry gap} of a submodular function $f$, which unifies many inapproximability results in the value oracle model, and proved new inapproximability results for some specific constrained settings.  Later, Dobzinski and Vondr\'ak \cite{Dobzinski2012} showed how these inapproximability bounds may be converted to matching complexity-theoretic bounds, which hold when $f$ is given explicitly, under the assumption that $RP \neq NP$.

Conforti and Cornu\'{e}jols \cite{Conforti1984} defined the \emph{total curvature} $c$ of non-decreasing submodular function $f$ as
\begin{equation}
\label{eq:curvature}
c = \max_{j \in \ground}\frac{f_{\emptyset}(j) - f_{\ground - j}(j)}{f_{\emptyset(j)}}
= 1 - \min_{j \in \ground} \frac{f_{\ground-j}(j)}{f_{\emptyset}(j)}.
\end{equation}
They showed that greedy algorithm has an approximation ratio of $\frac{1}{1+c}$ for the problem of maximizing a nondecreasing submodular function with curvature at most $c$ subject to a single matroid constraint.  In the special case of a uniform matroid, they were able to show that the greedy is a $\frac{1 - e^{-c}}{c}$-approximation algorithm.  Later, Vondr\'{a}k \cite{Vondrak2010} considered the continuous greedy algorithm in the setting of bounded curvature.  He introduced the notion of \emph{curvature with respect to the optimum}, which is a weaker notion than total curvature, and showed that the continuous greedy algorithm is a $\frac{1 - e^{-c}}{c}$-approximation for maximizing a nondecreasing submodular function $f$ subject to an arbitrary matroid constraint whenever $f$ has curvature at most $c$ with respect to the optimum.  He also showed that it is impossible to obtain a $\frac{1 - e^{-c}}{c}$-approximation in this setting when evaluating $f$ on only a polynomial number of sets.  Unfortunately, unlike total curvature, it is in general not possible to compute the curvature of a function with respect to the optimum, as it requires knowledge of an optimal solution.

We shall also consider the problem of minimizing nonincreasing \emph{supermodular} functions $f : 2^{\ground} \to \posreals$.  By analogy with total curvature, Il'ev \cite{Ilev2001} defines the \emph{steepness} $s$ of a nonincreasing supermodular function.  His definition, which is stated in terms of the marginal \emph{decreases} of the function, is equivalent to \eqref{eq:curvature} when reformulated in terms of marginal gains.  He showed that, in contrast to submodular maximization, the simple  greedy heuristic does not give a constant factor approximation algorithm in the general case.  However, when the supermodular function $f$ has total curvature at most $c$, he shows that the reverse greedy algorithm is an $\frac{e^{p}-1}{p}$-approximation where $p = \frac{c}{1-c}$.

\section{Preliminaries}
\label{sec:preliminaries}

We now fix some of our notation and give two lemmas pertaining to functions with bounded curvature.

\subsection{Set Functions}
A set function $f : 2^\ground \to \posreals$ is \emph{submodular} if $f(A) + f(B) \ge f(A \cap B) + f(A \cup B)$ for all $A, B\subseteq \ground$.  Submodularity can equivalently be characterized in terms of \emph{marginal values}, defined by $f_{A}(i) = f(A + i) - f(A)$ for $i \in \ground$ and $A \subseteq \ground - i$.  Then, $f$ is submodular if and only if $f_A(i) \ge f_B(i)$ for all $A \subseteq B \subseteq \ground$ and $i \not\in B$.    That is, submodular functions are characterized by decreasing marginal values.  

Intuitively, $f$ is \emph{supermodular} if and only if $-f$ is submodular.  That is, $f$ is supermodular if and only if $f_{A}(i) \le f_{B}(i)$ for all $A \subseteq B \subseteq \ground$ and $i \not \in B$.

Finally, we say that a function is \emph{non-decreasing}, or \emph{monotone increasing}, if $f_A(i) \ge 0$ for all $i \in \ground$ and $A \subseteq \ground$, and \emph{non-increasing}, or \emph{monotone decreasing}, if $f_A(i) \le 0$ for all $i \in \ground$ and $A \subseteq \ground$. 

\subsection{Matroids}
\label{sec:matroids}
We now present the definitions and notations that we shall require when dealing with matroids.  We refer the reader \cite{Schrijver2003} for a detailed introduction to basic matroid theory.  Let $\M=(\ground,\I)$ be a matroid defined on ground set $\ground$ with independent sets given by $\I$.  We denote by $\bases{\M}$ the set of all bases (inclusion-wise maximal sets in $\I$) of $\M$.  We denote by $P(\M)$ the matroid polytope for $\M$, given by:
\begin{equation*}
P(\M) = \conv \{ \chr{I}\,:\,I \in \I \} 
= \{ x \ge 0\,:\,\sum_{j \in S} x_{j} \le \rank{\M}(S),\ \forall S \subseteq \ground \},
\end{equation*}
where $\rank{\M}$ denotes the rank function associated with $\M$.
The second equality above is due to Edmonds \cite{Edmonds1971}.  Similarly, we denote by $B(\M)$ the base polytope associated with $\M$:
\begin{equation*}
B(\M) = \conv \{ \chr{I}\,:\,I \in \bases{\M} \} = \{ x \in P(\M)\,: \sum_{j \in \ground}x_{j} = \rank{\M}(\ground) \}.
\end{equation*}

For a matroid $\M = (\ground,\I)$, we denote by $\Mdual$ the dual system $(\ground,\Idual)$ whose independent sets $\Idual$ are defined as those subsets $A \subseteq \ground$ that satisfy $A \cap B = \emptyset$ for some $B \in \I$ (i.e., those subsets that are disjoint from some base of $\M$.  Then, a standard result of matroid theory shows that $\Mdual$ is a matroid whenever $\M$ is a matroid, and, moreover,  $\bases{\Mdual}$ is precisely the set $\{ \ground \setminus B : B \in \bases{\M} \}$ of complements of bases of $\M$.

Finally, given a set of elements $D \subseteq \ground$, we denote by $\M \restrict D$ the matroid $(\ground \cap D, \I')$ obtained by restricting to $\M$ to $D$.  The independent sets $\I'$ of $\M \restrict D$ are simply those independent sets of $\M$ that contain only elements from $D$.  That is, $I' = \{A \in \I :  A \cap D = A\}$.

\subsection{Lemmas for Functions with Bounded Curvature}
\label{sec:curvature}

We now give two general lemmas pertaining to functions of bounded curvature that will be useful in our analysis.  The proofs, which follow directly from \eqref{eq:curvature}, are given in the Appendix.
\begin{restatable}{lemma}{submodcurv}
\label{lem:submod-curv}
If $f : 2^{\ground} \to \posreals$ is a monotone increasing submodular function with total curvature at most $c$, then 
$\sum_{j \in A}f_{\ground - j}(j) \ge (1-c)f(A)$ for all $A \subseteq \ground$.
\end{restatable}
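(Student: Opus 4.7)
The plan is to chain together two applications of the hypotheses: the curvature bound (applied one element at a time), and subadditivity of $f$ (which is a consequence of submodularity together with $f(\emptyset)=0$, as we may assume by shifting).

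First I would rewrite the curvature definition \eqref{eq:curvature} in the form that is actually going to be useful here, namely the pointwise inequality
\begin{equation*}
f_{\ground - j}(j) \;\ge\; (1-c)\, f_{\emptyset}(j) \qquad \text{for every } j \in \ground.
\end{equation*}
This is immediate from the expression $c = 1 - \min_j f_{\ground - j}(j)/f_{\emptyset}(j)$ (with the usual convention that elements $j$ having $f_{\emptyset}(j)=0$ contribute nothing to either side, since monotonicity plus submodularity then forces $f_A(j)=0$ for every $A$).

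Summing the displayed inequality over $j \in A$ gives
\begin{equation*}
\sum_{j \in A} f_{\ground - j}(j) \;\ge\; (1-c)\sum_{j \in A} f_{\emptyset}(j),
\end{equation*}
so it remains only to show that $\sum_{j \in A} f_{\emptyset}(j) \ge f(A)$. For this I would fix any ordering $A=\{j_1,\dots,j_k\}$ and write $f$ as a telescoping sum of marginals, $f(A) = f(\emptyset) + \sum_{i=1}^{k} f_{\{j_1,\dots,j_{i-1}\}}(j_i)$. Submodularity bounds each term by $f_{\emptyset}(j_i)$, and assuming $f(\emptyset)=0$ (WLOG by subtracting a constant, which leaves both sides of the target inequality unchanged) this yields $f(A) \le \sum_{j \in A} f_{\emptyset}(j)$ as required.

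There is no real obstacle here; the only subtlety is the convention $f(\emptyset)=0$, which needs to be stated explicitly so that the telescoping bound $f(A) \le \sum_{j \in A} f_{\emptyset}(j)$ is valid. Combining the three displayed inequalities gives the claim $\sum_{j \in A} f_{\ground - j}(j) \ge (1-c)f(A)$.
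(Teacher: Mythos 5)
Your proof takes essentially the same route as the paper's: derive $f_{\ground-j}(j) \ge (1-c)f_{\emptyset}(j)$ from the curvature definition, sum over $j \in A$, and bound $\sum_{j\in A} f_{\emptyset}(j)$ from below by $f(A)$ via a telescoping decomposition of $f(A)$ into marginals together with the decreasing-marginals property of submodularity. The one point worth tightening is the ``WLOG $f(\emptyset)=0$'' remark: shifting $f$ by $-f(\emptyset)$ leaves the left-hand side $\sum_{j\in A} f_{\ground - j}(j)$ and the curvature $c$ unchanged (both depend only on marginals), but it replaces $f(A)$ on the right by $f(A) - f(\emptyset)$, so the two sides are \emph{not} both unchanged and the claimed inequality genuinely weakens. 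Indeed the lemma as stated is false when $f(\emptyset) > 0$: take $\ground = A = \{1\}$, $f(\emptyset)=1$, $f(\{1\})=2$, so $c=0$ but $f_{\ground-1}(1) = 1 < 2 = f(A)$. The paper's own proof quietly carries the same assumption --- its final step $(1-c)[f(A)-f(\emptyset)] \ge (1-c)f(A)$ is valid only when $f(\emptyset)=0$ --- so your argument matches the paper's intent exactly; just state $f(\emptyset)=0$ as a standing normalization rather than claiming it is obtained without loss of generality.
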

\begin{restatable}{lemma}{supmodcurv}
\label{lem:supmod-curv}
If $f : 2^{\ground}\to \posreals$ is a monotone decreasing supermodular function with total curvature at most $c$, then 
$(1-c)\sum_{j \in A}f_{\emptyset}(j) \ge -f(\ground \setminus A)$ for all $A \subseteq \ground$.
\end{restatable}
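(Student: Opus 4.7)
The plan is to reduce the claim to a one-element-at-a-time telescoping estimate, using supermodularity to control each marginal and the curvature hypothesis to replace marginals at large sets by marginals at the empty set.

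First I would translate the definition (2.1) into a form suited to nonincreasing supermodular $f$. Since $f_\emptyset(j)\le f_{\ground-j}(j)\le 0$ (supermodularity gives $f_\emptyset(j)\le f_{\ground-j}(j)$, and monotone decrease gives both nonpositive), the ratio $f_{\ground-j}(j)/f_\emptyset(j)$ lies in $[0,1]$ and the bound $\frac{f_{\ground-j}(j)}{f_\emptyset(j)}\ge 1-c$ becomes, upon multiplication by the nonpositive $f_\emptyset(j)$,
\begin{equation*}
f_{\ground-j}(j)\ \le\ (1-c)\,f_\emptyset(j)\quad\text{for every }j\in\ground.
\end{equation*}

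Next I would enumerate $A=\{a_1,\dots,a_k\}$ arbitrarily and write $A_i=\{a_1,\dots,a_i\}$, yielding the telescoping identity
\begin{equation*}
f(\ground)-f(\ground\setminus A)\ =\ \sum_{i=1}^{k} f_{\ground\setminus A_i}(a_i).
\end{equation*}
Since $\ground\setminus A_i\subseteq \ground - a_i$ and $a_i\notin\ground - a_i$, supermodularity gives $f_{\ground\setminus A_i}(a_i)\le f_{\ground-a_i}(a_i)$ (both nonpositive). Summing and rearranging,
\begin{equation*}
-f(\ground\setminus A)\ =\ -f(\ground)+\sum_{i=1}^{k} f_{\ground\setminus A_i}(a_i)\ \le\ \sum_{i=1}^{k} f_{\ground-a_i}(a_i),
\end{equation*}
where the final inequality also uses $-f(\ground)\le 0$, which holds because $f$ takes values in $\posreals$.

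Finally I would apply the curvature inequality termwise to get
\begin{equation*}
\sum_{i=1}^{k} f_{\ground-a_i}(a_i)\ \le\ (1-c)\sum_{i=1}^{k} f_\emptyset(a_i)\ =\ (1-c)\sum_{j\in A} f_\emptyset(j),
\end{equation*}
which combined with the previous display yields $(1-c)\sum_{j\in A}f_\emptyset(j)\ge -f(\ground\setminus A)$, as required. The only subtle point is bookkeeping of signs: because every marginal is nonpositive, multiplying by $f_\emptyset(j)$ flips the curvature inequality, and one must be careful that the supermodularity bound and the curvature bound both point in the direction that keeps the final chain of inequalities consistent. Beyond that, no obstacle is anticipated—the proof is a direct unfolding of the definitions.
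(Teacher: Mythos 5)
Your proof is correct and follows essentially the same route as the paper's: the same telescoping decomposition of $f(\ground) - f(\ground\setminus A)$ over the sets $\ground\setminus A_i$, the same use of supermodularity to pass to marginals at $\ground - a_i$, the same sign-aware rewriting of the curvature bound, and the same final use of $f(\ground)\ge 0$. The only difference is notational (indexing by $i$ rather than by $j$) and the direction in which the chain of inequalities is written.
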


\section{Submodular + Linear Maximization}
\label{sec:overv-our-appr}

Our new results for both submodular maximization and supermodular minimization with bounded curvature make use of an algorithm for the following meta-problem: we are given a monotone increasing, nonnegative, submodular function $\submod : 2^{\ground} \to \posreals$, a linear function $\lin:2^{\ground}\to \reals$, and a matroid  $\M=(\ground,\I)$ and must find a base $S \in \bases{\M}$ maximizing $\submod(S) + \lin(S)$.  Note that we do not require $\lin$ to be nonnegative.  Indeed, in the case of supermodular minimization (discussed in Section \ref{sec:superm-minim}), our approach shall require that $\lin$ be a negative, monotone decreasing function.  For $j \in \ground$, we shall write $\lin(j)$ and $\submod(j)$ as a shorthand for $\lin(\{j\})$ and $\submod(\{j\})$.  We note that because $\lin$ is linear, we have $\lin(A) = \sum_{j \in A}\lin(j)$ for all $A \subseteq \ground$.

Let $\submodmax = \max_{j \in \ground}\submod(j)$, $\linmax = \max_{j \in \ground}|\lin(j)|$, and $\vmax = \max(\submodmax,\linmax)$.  Then, because $\submod$ is submodular and $\lin$ is linear, we have both $\submod(A) \le n\vmax$ and $|\lin(A)| \le n\vmax$ for every set $A \subseteq \ground$.  Moreover, given $\lin$ and $\submod$, we can easily compute $\vmax$ in time $O(n)$.  Our main technical result is the following, which gives a joint approximation for $\submod$ and $\lin$.

\begin{theorem}
\label{thm:main}
For every $\epsilon > 0$, there is an algorithm that, given a monotone increasing submodular function $g:2^X \rightarrow \posreals$, a linear function $\ell:2^X \rightarrow \reals$ and a matroid $\M$, produces a set $S \in \bases{M}$ in polynomial time satisfying
\begin{equation*}
\submod(S) + \lin(S) \ge \left(1-e^{-1}\right)\submod(O) + \lin(O) - O(\epsilon)\cdot \vmax,
\end{equation*}
for every $O \in \bases{M}$, with high probability.
\end{theorem}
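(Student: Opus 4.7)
The plan is to modify the continuous greedy algorithm on the multilinear extension $\submodExt$ of $\submod$ together with the linear extension $\linExt(x) = \sum_j \lin(j) x_j$ of $\lin$, and then round via pipage. Setting $\objExt = \submodExt + \linExt$, a naive run of continuous greedy on $\objExt$ only gives $\objExt(x(1)) \ge (1-e^{-1})(\submod(O) + \lin(O))$, which is insufficient when $\lin(O)$ is negative, or more generally whenever we want the full factor of $1$ on $\lin$. The key idea is to decouple the analyses of $\submodExt$ and $\linExt$ by forcing the direction of motion to collect linear value at a rate matching the unknown $\lin(O)$ at every instant.

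Concretely, we enumerate guesses $\tau$ for $\lin(O)$ from a polynomial grid of spacing $\epsilon \vmax$ over $[-n\vmax, n\vmax]$, and for each $\tau$ run the following continuous greedy starting from $x(0) = 0$:
\begin{equation*}
\dot{x}(t) = v(t), \qquad v(t) = \arg\max_{v \in B(\M),\ \langle \lin, v \rangle \ge \tau}\,\langle \widetilde{\nabla \submodExt}(x(t)), v \rangle,
\end{equation*}
where $\widetilde{\nabla \submodExt}$ is the usual Monte Carlo estimate of the gradient (the source of the ``with high probability'' clause). The inner optimization is a linear program, and if infeasible we discard $\tau$. For the correct value $\tau^*$ equal to the largest grid point at most $\lin(O)$, the indicator $\chr{O} \in B(\M)$ is feasible, so the standard continuous greedy argument applied to $\submodExt$ alone yields
\begin{equation*}
\tfrac{d}{dt}\submodExt(x(t)) \ge \langle \nabla \submodExt(x(t)), \chr{O} \rangle \ge \submod(O) - \submodExt(x(t)),
\end{equation*}
which integrates (Gr\"onwall) to $\submodExt(x(1)) \ge (1-e^{-1})\submod(O)$. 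Simultaneously $\tfrac{d}{dt}\linExt(x(t)) = \langle \lin, v(t) \rangle \ge \tau^* \ge \lin(O) - \epsilon \vmax$, so $\linExt(x(1)) \ge \lin(O) - \epsilon \vmax$. After absorbing sampling and time-discretization errors, this yields the fractional guarantee $\objExt(x(1)) \ge (1-e^{-1})\submod(O) + \lin(O) - O(\epsilon)\vmax$.

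To round $x(1) \in B(\M)$ to an integral base without loss, we use pipage rounding on $\objExt$. Since $\objExt$ is a sum of a submodular and a linear function, it is itself submodular, and its multilinear extension is convex along every swap direction $e_i - e_j$---the coefficient of $t^2$ in the one-variable restriction is a non-negative expression from diminishing returns on $\submodExt$, while $\linExt$ contributes only a linear term. Hence each pipage step can be chosen so that $\objExt$ is non-decreasing, and the final integral $S \in \bases{\M}$ satisfies $\submod(S) + \lin(S) = \objExt(\chr{S}) \ge \objExt(x(1))$. Returning the best $S$ over all guesses of $\tau$ completes the proof.

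The main technical obstacle is choosing the parameters---time step $\delta = \Theta(\epsilon)$, Monte Carlo sample count, grid density---so that all accumulated errors stay within the $O(\epsilon)\vmax$ slack while the algorithm runs in polynomial time; this is mostly bookkeeping but not entirely free, since the discretized flow must still satisfy $\langle \lin, v \rangle \ge \tau$ uniformly in $t$, and gradient estimation error must not push the computed direction out of the slightly shrunken feasible polytope. A parallel proof can be carried out using the non-oblivious local search of Filmus and Ward~\cite{Filmus2014}, using the auxiliary potential $\NOsubmod(S) + \lin(S)$ in place of their $\NOsubmod(S)$; at a local optimum the matroid exchange bijection produces the $(1-e^{-1})\submod(O)$ bound from the $\NOsubmod$ terms and a telescoping $\lin(O) - \lin(S)$ bound from the linear terms, matching the fractional analysis.
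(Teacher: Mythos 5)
Your proposal is correct and takes essentially the same route as the paper: guess $\lin(O)$ on a polynomial grid, run a modified continuous greedy that constrains the velocity $v(t) \in B(\M)$ to satisfy $\linExt(v(t)) \ge \tau$ while maximizing $v(t) \cdot \tilde\nabla\submodExt$, analyze the linear and submodular parts separately (giving $\linExt(x(1)) \ge \lin(O) - \epsilon\vmax$ and $\submodExt(x(1)) \ge (1-e^{-1})\submod(O)$ via the usual Gr\"onwall argument), then pipage-round the multilinear extension of $\submod+\lin$, which remains submodular even though $\lin$ may be negative. You also correctly identify the paper's alternative proof via non-oblivious local search with the potential $(1-e^{-1})\NOsubmod + \lin$.
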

In the next two sections, we give two different algorithms satisfying the conditions of Theorem \ref{thm:main}.

\section{A Modified Continuous Greedy Algorithm}
\label{sec:cont-greedy}
The first algorithm we consider is a modification of the continuous greedy algorithm of \cite{Calinescu2011}
We first sketch the algorithm conceptually in the continuous setting, ignoring certain technicalities.  

\subsection{Overview of the Algorithm}
Consider $x\in [0,1]^{X}$.  For any function $f : 2^{X}\to \reals$, the \emph{multilinear extension}  of $f$ is a function $F : [0,1]^{X} \to \reals$ given by $F(x) = \expect{f(\cR_{x})}$, where $\cR_{x}$ is a random subset of $\ground$ in which each element $e$ appears independently with probability $x_{e}$.  We let $G$ denote the multilinear extension of the given, monotone increasing submodular function $\submod$, and $L$ denote the multilinear extension of the given linear function $\lin$.
Note that due to the linearity of expectation, $L(x) = \expect{\lin(\cR_{x})} = \sum_{j \in X}x_{j}\lin(j)$.  That is, the multilinear extension $L$ corresponds to the natural, linear extension of $\lin$.  Let $P(\M)$ and $B(\M)$ be the matroid polytope and matroid base polytope associated $\M$, and let $O$ be the arbitrary base in $\bases{\M}$ to which we shall compare our solution in Theorem \ref{thm:main}.
\begin{figure}
\begin{framed}
\noindent {\bf Modified Continuous Greedy}
\begin{itemize}[itemsep=0ex, topsep=0.5ex]
\item Guess the values of $\lambda = \lin(O)$ and $\gamma = \submod(O)$.
\item Initialize $x = {\bf 0}$.
\item For time running from $t=0$ to $t=1$, update $x$ according to $\frac{dx}{dt} = v(t)$,
where $v(t) \in P(\M)$ is a vector satisfying both:
\begin{align*}
L(v(t)) &\geq \lambda \\
v(t) \cdot \nabla G(x(t)) &\geq \gamma - G(x(t)).
\end{align*}
Such a vector exists because $\chr{O}$ is one possible candidate (as in the analysis of \cite{Calinescu2011}).
\item Apply pipage rounding to the point $x(1)$ and return the resulting solution.
\end{itemize}
\end{framed}
\caption{The modified continuous greedy algorithm}
\label{fig:cont-greedy-alg}
\end{figure}
Our algorithm is shown in Figure \ref{fig:cont-greedy-alg}.
In contrast to the standard continuous greedy algorithm, in the third step we require a direction that is larger than \emph{both} the value of $\ell(O)$ and the residual value $\gamma - G(x(t))$.  Applying the standard continuous greedy algorithm to $(\submod + \lin)$ gives a direction that is larger than the \emph{sum} of these two values, but this is insufficient for our purposes.  

Our analysis proceeds separately for $L(x)$ and $G(x)$. First, because $L$ is linear, we obtain
$$ \frac{dL}{dt}(x(t)) = L\left(\frac{dx}{dt}\right) \geq \lambda, $$
for every time $t$, and hence $L(x(1)) \geq \lambda = \lin(O)$. For the submodular component, we obtain
$$ \frac{dG}{dt}(x(t)) = \frac{dx}{dt} \cdot \nabla G(x(t)) \geq \gamma - G(x(t)) $$
similar to the analysis in \cite{Calinescu2011}. This leads to a differential equation that gives
$$G(x(t)) \geq (1 - e^{-t}) \gamma.$$

The final pipage rounding phase is oblivious to the value of the objective function and so is not affected by the potential negativity of $\ell$.  We can view $\submodExt + \linExt$ as the multilinear extension of $\submod + \lin$ and so, as in the standard continuous greedy analysis, pipage rounding produces an integral solution $S$ satisfying $\submod(S) + \lin(S) \ge \submodExt(x(1)) + \linExt(x(1)) \ge (1 - e^{-t})\submod(O) + \lin(O)$.

\subsection{Implementation of the Modified Continuous Greedy Algorithm}

Now we discuss the technical details of how the continuous greedy algorithm can be implemented efficiently.
There are three main issues that we ignored in our previous discussion: (1) How do we ``guess" the values of $\lin(O)$ and $\submod(O)$; 
(2) How do we find a suitable direction $v(t)$ in each step of the algorithm; and 
(3) How do we discretize time efficiently?  Let us now address them one by one.

\paragraph{Guessing the optimal values:}
In fact, it is enough to guess the value of $\lin(O)$; we will optimize over $v(t) \cdot \nabla \submodExt$ later.  
Recall that $|\lin(O)| \le n\vmax$.  We discretize the interval\footnote{In the applications we consider $\lin$ is either nonnegative or non-positive, and so we need only consider half of the given interval.  For simplicity, here we give a general approach that does not depend on the sign of $\lin$.  In general, we have favored, whenever possible, simplicity in the analysis over obtaining the best runtime bounds.} $[-n\vmax, n\vmax]$ with $O(\epsilon^{-1})$ points of the form $i \epsilon \cdot \vmax$ for $-\epsilon^{-1} \le i \le \epsilon^{-1}$, filling the interval $[-\vmax, \vmax]$, together with $O(\epsilon^{-1}n\log n)$ points of the form $(1 + \epsilon/n)^{i}\cdot\vmax$ and $-(1 + \epsilon/n)^{i}\cdot\vmax$ for $0 \le i \le \log_{1 + \epsilon/n} n$, filling the intervals $[\vmax, n\vmax]$ and $[-n\vmax, -\vmax]$, respectively.  We then run the following algorithm using each point as a guess for $\lambda$, and return the best solution found.  Then if $|\lin(O)| < \vmax$, we must have  
$$\lin(O) \ge \lambda \ge \lin(O) - \epsilon\cdot\vmax,$$ for some iteration (using one of the guesses in $[-\vmax, \vmax]$).  Similarly, if $|\lin(O)| \ge \vmax$, then for some iteration we have
$$\lin(O) \ge \lambda \ge \lin(O) - (\epsilon/n)|\lin(O)| \ge \lin(O) - \epsilon\cdot\vmax$$
(using one of the guesses in $[\vmax, n\vmax]$ or $[-n\vmax, -\vmax]$).  For the remainder of our analysis we consider this particular iteration.

\paragraph{Finding a suitable direction:}
Given our guess of $\lambda$ and a current solution $x(t)$, our goal is to find a direction $v(t) \in P(\M)$ such that
$L(v(t)) \geq \lambda$ and $v(t) \cdot \nabla \submodExt(x(t)) \geq \gamma - \submodExt(x(t))$.  As in \cite{Calinescu2011}, we must estimate $\nabla \submodExt(x(t))$ by random sampling.   Then, given an estimate $\tilde{\nabla}G$, we solve the linear program:
\begin{equation}
\max \{ v \cdot \tilde{\nabla} G:  v \in B(\M),  L(v) \geq \lambda \}.\label{eq:lp}
\end{equation}
We can do this by the ellipsoid method, for example (or more efficiently using other methods).

Following the analysis of \cite{Calinescu2011}, we can obtain, in polynomial time, an estimate satisfying
\begin{equation*}
\label{eq:cg-error-bound}
|v(t) \cdot \tilde{\nabla} \submodExt(x(t)) - v(t) \cdot \nabla \submodExt(x(t))| \le O(\epsilon) \cdot \vmax,
\end{equation*}
with high probability.  Since $L(O) = \lin(O) \ge \lambda$, the base $O$ is a feasible solution of \eqref{eq:lp}.  Because $G(x(t))$ is concave along any nonnegative direction, we then have:
\begin{equation}
\label{eq:cg-submod-bound}
v \cdot \nabla G(x(t)) \ge \submod(O) - \submodExt(x(t)) - O(\epsilon)\cdot \vmax,
\end{equation}
just as in the analysis of \cite{Calinescu2011}.

\paragraph{Discretizing the algorithm:}
We discretize time into steps of length $\delta=\epsilon/n^{2}$; let us assume for simplicity that $1/\delta$ is an integer.
In each step, we find a direction $v(t)$ as described above and we update
\[
x(t+\delta) = x(t) + \delta \cdot v(t).
\]
Clearly, after $1/\delta$ steps we obtain a solution $x(1)$ which is a convex combination of points $v(t) \in P(\M)$, and therefore a feasible solution. In each step, we had $\linExt(v(t)) \geq \lambda$ and so
\[
\linExt(x(1)) = \sum_{i=1}^{1/\delta} \delta \cdot \linExt(v(i\cdot \delta)) \geq \lambda \geq \lin(O) - \epsilon\cdot \vmax.
\]
The analysis of the submodular component follows along the lines of \cite{Calinescu2011}.  In one time step, we gain
\begin{align*}
\submodExt(x(t&+\delta)) - \submodExt(x(t)) \\
& \geq (1 - n\delta) \cdot (x(t+\delta) - x(t)) \cdot \nabla \submodExt(x(t)) \\
& \geq (1 - n\delta)\cdot \delta \cdot [\submod(O) - \submodExt(x(t)) - O(\epsilon)\cdot \vmax] \\
& \geq \delta\cdot [\submod(O) - \submodExt(x(t)) - O(\epsilon)\cdot \vmax - (\epsilon/n)\cdot \submod(O)] \\
& = \delta\cdot [\submod(O) - \submodExt(x(t)) - O(\epsilon)\cdot \vmax]
\end{align*}
using the bound \eqref{eq:cg-submod-bound}. By induction (as in \cite{Calinescu2011}), we obtain
$$ \submodExt(x(t)) \geq (1 - e^{-t})\submod(O) - O(\epsilon)\cdot\vmax,$$
and so $$\submodExt(x(1)) + \linExt(x(1)) \ge (1 - e^{-1})\submod(O) + \lin(O) - O(\epsilon)\cdot\vmax,$$ as required.

\section{Non-Oblivious Local Search}
\label{sec:non-oblivious-local}

We now give another proof of Theorem \ref{thm:main}, using a modification of the local search algorithm of \cite{Filmus2014}.  In contrast to the modified continuous greedy algorithm, our modified local search algorithm does not need to guess the optimal value of $\ell(O)$, and also does not need to solve the associated continuous optimization problem given in \eqref{eq:lp}.  However, here the convergence time of the algorithm becomes an issue that must be dealt with.

\subsection{Overview of the Algorithm}
\label{sec:overv-non-obliv}

We begin by presenting a few necessary lemmas and definitions from the analysis of \cite{Filmus2014}.  We shall require the following general property of matroid bases, first proved by Brualdi \cite{Brualdi1969}, which can also be found in e.g. \cite[Corollary 39.12a]{Schrijver2003}.
\begin{lemma}
\label{lem:brualdi}
Let $\M$ be a matroid and $A$ and $B$ be two bases in $\bases{\M}$.  Then, there exists a bijection $\pi : A \to B$ such that $A - x + \pi(x) \in \bases{\M}$ for all $x \in A$.
\end{lemma}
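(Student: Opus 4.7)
The plan is to prove Lemma~\ref{lem:brualdi} by reducing it to a bipartite matching question and applying Hall's marriage theorem. Since $|A| = |B|$, I would first handle the common part trivially by setting $\pi(x) = x$ for every $x \in A \cap B$ (so $A - x + \pi(x) = A \in \bases{\M}$). It then suffices to produce a bijection between $A \setminus B$ and $B \setminus A$ such that $A - x + \pi(x) \in \bases{\M}$ for every $x \in A \setminus B$.

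To carry this out, I would form the bipartite graph $H$ on $(A \setminus B) \cup (B \setminus A)$ in which $x$ and $y$ are adjacent whenever $A - x + y \in \bases{\M}$. The workhorse is the standard fundamental-circuit fact: because $A$ is a base, for every $y \notin A$ the set $A + y$ contains a unique circuit $C_{y}$, and $A - x + y$ is a base precisely when $x \in C_{y}$. Thus if $S \subseteq A \setminus B$ and $y \in B \setminus A$ is not adjacent in $H$ to any element of $S$, then $C_{y} \cap S = \emptyset$, so $C_{y} \setminus \{y\} \subseteq A \setminus S$ and hence $y \in \spn(A \setminus S)$. Letting $N(S) \subseteq B \setminus A$ denote the neighborhood of $S$ in $H$, and noting that $A \cap B \subseteq A \setminus S$ because $S \subseteq A \setminus B$, this yields
\begin{equation*}
B \subseteq \spn(A \setminus S) \cup N(S).
\end{equation*}
Taking ranks, using subadditivity, and using $\rank{\M}(A \setminus S) = |A| - |S|$ and $\rank{\M}(B) = |A|$, I obtain $|A| \leq (|A| - |S|) + |N(S)|$, i.e.\ $|N(S)| \geq |S|$. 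This is exactly Hall's condition, so $H$ has a perfect matching, which together with the identity on $A \cap B$ gives the required bijection $\pi$.

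I do not anticipate a serious obstacle: the proof is essentially a one-shot application of the fundamental-circuit characterization combined with a rank/Hall argument. The one place to be careful is the exchange characterization ``$A - x + y \in \bases{\M}$ iff $x \in C_{y}$'', which relies on $A$ being a base (so that cardinalities match and $A + y$ has a \emph{unique} circuit). Beyond that, the only bookkeeping is confirming that restricting the bipartite analysis to $(A \setminus B) \cup (B \setminus A)$ is harmless, which is immediate from the identity choice of $\pi$ on $A \cap B$.
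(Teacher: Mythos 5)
The paper does not prove Lemma~\ref{lem:brualdi}; it states the result and cites Brualdi \cite{Brualdi1969} and Schrijver \cite[Corollary 39.12a]{Schrijver2003}, so there is no in-paper proof to compare against. Your argument is correct and self-contained: the identity map handles $A\cap B$, the fundamental-circuit fact (for a base $A$ and $y\notin A$, $A-x+y\in\bases{\M}$ iff $x$ lies in the unique circuit of $A+y$) is applied correctly, and the containment $B\subseteq\spn(A\setminus S)\cup N(S)$ together with monotonicity and subadditivity of rank yields
\[
|A|=\rank{\M}(B)\le\rank{\M}(A\setminus S)+|N(S)|=(|A|-|S|)+|N(S)|,
\]
which is Hall's condition; the resulting matching saturating $A\setminus B$ is perfect because $|A\setminus B|=|B\setminus A|$. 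This is essentially the standard Hall's-theorem proof of Brualdi's exchange theorem, so it supplies a complete proof where the paper merely cites one.
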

We can restate Lemma \ref{lem:brualdi} as follows: let $A = \{a_{1},\ldots,a_{r}\}$ and $B$ be bases of a matroid $\M$ of $r$.  Then we can index the elements of $b_{i} \in B$ so that $b_{i} = \pi(a_{i})$, and then we have that $A - a_{i} + b_{i} \in \bases{\M}$ for all $1 \le i \le r$.  The resulting collection of sets $\{A - a_{i} +b_{i}\}_{i \in A}$ will define a set of feasible swaps between the bases $A$ and $B$ that we consider when analyzing our local search algorithm.

The local search algorithm of \cite{Filmus2014} maximizes a monotone submodular function $\submod$ using a simple local search routine that evaluates the quality of the current solution using an auxiliary potential $\NOsubmod$, derived from $\submod$ as follows:
\begin{equation*}
\NOsubmod(A) = \sum_{B \subseteq A}\submod(B) \cdot \int_{0}^{1}\frac{e^{p}}{e - 1}\cdot p^{|B| - 1}(1 - p)^{|A| - |B|} dp.
\end{equation*}

We defer a discussion of issues related to convergence and computing $\NOsubmod$ until the next subsection, and first sketch the main idea of our modified algorithm.  We shall make use of the following fact, proved in \cite[Lemma 4.4, p.\ 524-5]{Filmus2014}: for all $A$,
$$\submod(A) \le \NOsubmod(A)\le C \cdot \submod(A) \ln n,$$ for some constant $C$.

In order to jointly maximize $\submod(S) + \lin(S)$, we employ a modified local search algorithm that is guided by the potential $\pot$, given by:
\begin{equation*}
\pot(A) = (1 - e^{-1})\NOsubmod(A) + \lin(A).
\end{equation*}
Our final algorithm is shown in Figure \ref{fig:non-oblv-alg}.

\begin{figure}
\begin{framed}
\noindent {\bf Non-Oblivious Local Search}
\begin{itemize}[itemsep=0ex, topsep=0.5ex]
\item Let $\delta = \frac{\epsilon}{n}\cdot\vmax$.
\item $S \gets $ an arbitrary base $\sinit \in \bases{\M}$.
\item  While there exists $a \in S$ and $b \in \ground \setminus S$
such that $S - a + b \in \bases{\M}$ and $$\pot(S - a + b) \ge \pot(S) + \delta,$$ set $S \gets S - a + b$.
\item Return $S$
\end{itemize}
\end{framed}
\caption{The non-oblivious local search algorithm}
\label{fig:non-oblv-alg}
\end{figure}

The following Lemma shows that if it is impossible to significantly improve $\pot(S)$ by exchanging a single element, then both $\submod(S)$ and $\lin(S)$ must have relatively high values.

\begin{lemma}
\label{lem:locality-gap}
Let $A = \{a_{1},\ldots,a_{r}\}$ and $B = \{b_{1},\ldots,b_{r}\}$ be any two bases of a matroid $\M$, and suppose that the elements of $B$ are indexed according to Lemma \ref{lem:brualdi} so that $A - a_{i} + b_{i} \in \bases{\M}$ for all $1 \le i \le r$.  Then,
\begin{equation*}
\submod(A) + \lin(A) 
\ge (1 - e^{-1})\submod(B) + \lin(B) + \sum_{i=1}^{r}\left[\pot(A) - \pot(A\!-\!a_{i}\!+\!b_{i})\right].
\end{equation*}
\end{lemma}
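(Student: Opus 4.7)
The plan is to isolate the two ingredients of $\pot$ and handle them independently: the linear part will telescope through the swap sum by linearity, and the non-oblivious submodular part will be controlled by the core combinatorial estimate from \cite{Filmus2014}.

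First, I would substitute $\pot(A)=(1-e^{-1})\NOsubmod(A)+\lin(A)$ into each difference $\pot(A)-\pot(A-a_i+b_i)$, splitting the right-hand sum as
\begin{equation*}
\sum_{i=1}^{r}\!\big[\pot(A)-\pot(A\!-\!a_i\!+\!b_i)\big]
=(1-e^{-1})\!\sum_{i=1}^{r}\!\big[\NOsubmod(A)-\NOsubmod(A\!-\!a_i\!+\!b_i)\big]+\sum_{i=1}^{r}\big[\lin(a_i)-\lin(b_i)\big].
\end{equation*}
Because $\lin$ is linear and the indexing from Lemma~\ref{lem:brualdi} puts $a_i$ and $b_i=\pi(a_i)$ in bijection across $A$ and $B$, the linear sum telescopes to $\lin(A)-\lin(B)$. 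Plugging this back into the inequality we want to prove and cancelling $\lin(A)-\lin(B)$ from both sides, the claim reduces to the purely submodular statement
\begin{equation*}
(1-e^{-1})\sum_{i=1}^{r}\big[\NOsubmod(A)-\NOsubmod(A-a_i+b_i)\big]\le \submod(A)-(1-e^{-1})\submod(B).
\end{equation*}

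Second, I would invoke the central combinatorial lemma of Filmus and Ward, which is precisely the property for which their potential $\NOsubmod$ was engineered: for any two bases $A,B$ of $\M$ (under the Brualdi indexing), the aggregate non-oblivious swap gain toward $B$ is bounded below by the quantity $\submod(B)-(1-e^{-1})^{-1}\submod(A)$. Rearranging their estimate yields exactly the inequality displayed above. Since this lemma is a statement about the matroid, the bases, and $\submod$ alone, its proof transfers verbatim to our setting; the linear perturbation $\lin$ has already been dealt with.

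The main (minor) obstacle is verifying that the version of Filmus--Ward's combinatorial identity I am citing is the one I need, i.e.\ with the constants matched so that after multiplying by $(1-e^{-1})$ I obtain the exact inequality $\submod(A)-(1-e^{-1})\submod(B)$ on the right-hand side. This is a bookkeeping issue rather than a conceptual one, and it boils down to checking the coefficient $\tfrac{e^{p}}{e-1}$ appearing in the definition of $\NOsubmod$. Once that matches, combining the linear telescoping identity with the submodular Filmus--Ward bound and adding back $\lin(A)$ on the left completes the proof.
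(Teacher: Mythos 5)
Your proof is correct and matches the paper's argument in all essentials: both decompose $\pot$ into its linear and non-oblivious submodular pieces, telescope the linear part by Brualdi's bijection, and invoke the Filmus--Ward exchange bound $\frac{e}{e-1}\submod(A) \ge \submod(B) + \sum_i[\NOsubmod(A) - \NOsubmod(A-a_i+b_i)]$, with the coefficient $(1-e^{-1})\cdot\frac{e}{e-1}=1$ working out exactly as you anticipated. The only cosmetic difference is that the paper adds the two inequalities directly rather than cancelling $\lin(A)-\lin(B)$ from both sides, but this is an identical argument.
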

\begin{proof}
Filmus and Ward \cite[Theorem 5.1, p.\ 526]{Filmus2014} show that for any submodular function $\submod$, the associated function $\NOsubmod$ satisfies
\begin{equation}
\label{eq:filmus-locality-gap}
\frac{e}{e - 1}\submod(A) \ge \submod(B) + \sum_{i=1}^{r}\left[\NOsubmod(A) - \NOsubmod(A - a_{i} + b_{i})\right].
\end{equation}
We note that since $\lin$ is linear, we have:
\begin{align}
\label{eq:lin-locality-gap}
\lin(A)
&= \lin(B) + \sum_{i = 1}^{r}\left[\lin(a_{i}) - \lin(b_{i})\right] \notag \\
&= \lin(B) + \sum_{i = 1}^{r}\left[\lin(A) - \lin(A - a_{i} + b_{i})\right] 
\end{align}
Adding $(1 - e^{-1})$ times \eqref{eq:filmus-locality-gap} to \eqref{eq:lin-locality-gap} then completes the proof.
\end{proof}

Suppose that $S \in \bases{\M}$ is locally optimal for $\pot$ under single-element exchanges, and let $O$ be an arbitrary base of $\M$.  Then, local optimality of $S$ implies that $\pot(S) - \pot(S - s_{i} + o_{i}) \ge 0$ for all $i \in [r]$, where the elements $s_{i}$ of $S$ and $o_{i}$ of $O$ have been indexed according to Lemma \ref{lem:brualdi}.  Then, Lemma \ref{lem:locality-gap} gives $\submod(S) + \lin(S) \ge \left(1 - e^{-1}\right)\submod(O) + \lin(O)$, as required by Theorem \ref{thm:main}.

\subsection{Implementation of the Non-Oblivious Local Search Algorithm}
\label{sec:impl-non-obliv}

We now show how to obtain a polynomial-time algorithm from Lemma \ref{lem:locality-gap}.  We face two technical difficulties:
(1) how do we compute $\pot$ efficiently in polynomial time; and (2) how do we ensure that the search for improvements converges to a local optimum in polynomial time?  As in the case of the continuous greedy algorithm, we can address these issues by using standard techniques, but we must be careful since $\lin$ may take negative values.  As in that case, we have not attempted to obtain the most efficient possible running time analysis here, focusing instead on simplifying the arguments.

\paragraph{Estimating $\pot$ efficiently:}

Although the definition of $\NOsubmod$ requires evaluating $\submod$ on a potentially exponential number of sets, Filmus and Ward show that $\NOsubmod$ can be estimated efficiently using a sampling procedure:
\begin{lemma}[{\cite[Lemma 5.1, p.\ 525]{Filmus2014}}]
\label{lem:NOsubmodEst}
Let $\NOsubmodEst(A)$ be an estimate of $\NOsubmod(A)$ computed from $N\!=\Omega(\epsilon^{-2}\ln^{2}n\ln M)$ samples of $\submod$.  Then, 
\[
\prob{|\NOsubmodEst(A) - \NOsubmod(A)| \ge \epsilon \cdot \NOsubmod(A)} = O(M^{-1}).
\]
\end{lemma}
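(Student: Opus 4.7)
The plan is to prove the lemma by constructing an unbiased single-sample estimator of $\NOsubmod(A)$ and then invoking a standard concentration inequality.

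First, I would rewrite $\NOsubmod(A)$ as the expectation of a random variable computable from a single evaluation of $\submod$. Observing that
\[
\sum_{B \subseteq A} \submod(B) \, p^{|B|}(1-p)^{|A|-|B|} = \expect{\submod(\cR_{p}(A))},
\]
where $\cR_{p}(A)$ is a random subset of $A$ including each element independently with probability $p$, the definition of $\NOsubmod$ becomes
\[
\NOsubmod(A) \;=\; \int_{0}^{1}\frac{e^{p}}{(e-1)\,p}\,\expect{\submod(\cR_{p}(A))}\,dp.
\]
Choosing a sampling density $\rho(p)$ on $[0,1]$ (essentially $\rho(p) \propto p$, so that the $1/p$ singularity at $p=0$ is cancelled; the factor $\submod(\emptyset)=0$ handles the boundary regardless), I define the single-sample estimator $Y$ by first drawing $p \sim \rho$, then drawing $\cR_{p}(A)$, and finally outputting
\[
Y \;=\; \frac{e^{p}}{(e-1)\, p\, \rho(p)}\,\submod(\cR_{p}(A)).
\]
By construction $\expect{Y} = \NOsubmod(A)$, and monotonicity of $\submod$ gives $0 \le \submod(\cR_{p}(A)) \le \submod(A)$, so the weighting (with $\rho$ chosen as above) yields a bound $0 \le Y \le K \cdot \submod(A)$ for some absolute constant $K$.

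Next, I would take $\NOsubmodEst(A)$ to be the empirical mean $\frac{1}{N}\sum_{i=1}^{N} Y_{i}$ of $N$ independent copies of $Y$. Since each $Y_{i}$ lies in $[0, K\submod(A)]$, Hoeffding's inequality yields
\[
\prob{|\NOsubmodEst(A) - \NOsubmod(A)| \ge t} \;\le\; 2\exp\!\left(-\frac{2Nt^{2}}{K^{2}\submod(A)^{2}}\right).
\]
Setting $t = \epsilon\,\NOsubmod(A)$ and using the Filmus--Ward bound $\submod(A) \le \NOsubmod(A) \le C\,\submod(A)\ln n$ cited just before the lemma, the right-hand side becomes at most $2\exp(-\Omega(N\epsilon^{2}/\ln^{2} n))$. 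Choosing $N = \Omega(\epsilon^{-2}\ln^{2}n \ln M)$ drives this to $O(M^{-1})$, as required.

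The main obstacle is the choice of sampling scheme: the natural density $\rho(p) \equiv 1$ produces an estimator with unbounded range near $p=0$ because of the $1/p$ factor. The fix is to absorb $1/p$ into $\rho$ (or equivalently, to sample a level $|B|$ and then a conditional uniform subset), so that the ratio between the range of $Y$ and its mean is controlled by the $\ln n$ factor from the Filmus--Ward bound — this is precisely what makes the $\ln^{2} n$ appear in $N$. Once the estimator is bounded in terms of $\submod(A)$, the rest is a routine Hoeffding-plus-union-bound argument.
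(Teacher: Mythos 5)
The lemma is cited directly from Filmus and Ward; the present paper does not reproduce its proof, so there is nothing here to compare against line by line. Your skeleton — write $\NOsubmod(A)$ as the expectation of a single-sample estimator, then apply Hoeffding and a union bound — is the right shape of argument and matches what Filmus and Ward do. The integral identity you start from, $\NOsubmod(A) = \int_{0}^{1}\frac{e^{p}}{(e-1)\,p}\,\expect{\submod(\cR_{p}(A))}\,dp$, is also correct.

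However, there is a genuine gap in the key technical step. You assert that an appropriate choice of density $\rho$ makes $Y = \frac{e^{p}}{(e-1)\,p\,\rho(p)}\,\submod(\cR_{p}(A))$ satisfy $0 \le Y \le K\,\submod(A)$ for an \emph{absolute} constant $K$. That is impossible: boundedness of the importance weight $\frac{e^{p}}{(e-1)\,p\,\rho(p)}$ by $K$ would force $\rho(p) \ge \frac{1}{K(e-1)\,p}$ as $p \to 0$, which is not integrable on $[0,1]$, so no probability density achieves it. (Your suggested $\rho(p) \propto p$ goes in the wrong direction entirely — it produces a weight of order $1/p^{2}$, making the singularity at $p=0$ worse, not better.) The best one can do is to take $\rho(p) \propto 1/p$ on $[1/n,1]$, or equivalently sample a level $k$ with probability proportional to the aggregated coefficient $w_{k}(|A|) = \binom{|A|}{k}\int_{0}^{1}\frac{e^{p}}{e-1}p^{k-1}(1-p)^{|A|-k}\,dp$ and then a uniform size-$k$ subset of $A$; either way the normalizing constant is $\Theta(\ln n)$, so the estimator satisfies $Y = O(\submod(A)\ln n)$, not $O(\submod(A))$. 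This is the \emph{actual} source of the $\ln^{2}n$ in the sample complexity. Your own Hoeffding calculation betrays the inconsistency: with $t = \epsilon\,\NOsubmod(A)$ and the lower bound $\NOsubmod(A) \ge \submod(A)$ — which is what the exponent really invokes — an absolute $K$ would yield $N = \Omega(\epsilon^{-2}\ln M)$ with no $\ln n$ at all, strictly stronger than the lemma; and the upper bound $\NOsubmod(A) \le C\,\submod(A)\ln n$, which you cite as the reason for $\ln^{2}n$, can only make the exponent larger, never smaller. Once the range is correctly identified as $\Theta(\submod(A)\ln n)$, the rest of your Hoeffding-plus-union-bound argument goes through and yields the stated $N = \Omega(\epsilon^{-2}\ln^{2}n\ln M)$.
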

We let $\potEst(A) = \NOsubmodEst(A) + \lin(A)$ be an estimate of $\pot$.  Set $\delta = \frac{\epsilon}{n}\cdot \vmax$.
We shall ensure that $\potEst(A)$ differs from $\pot(A)$ by at most
\begin{align*}
\delta = \frac{\epsilon}{n}\cdot \vmax \ge \frac{\epsilon}{n^{2}}\cdot \submod(A) 
&= \frac{\epsilon}{C\cdot n^{2} \ln n}\cdot C\cdot \submod(A) \ln n \\
&\ge \frac{\epsilon}{C\cdot n^{2} \ln n}\cdot \NOsubmod(A).
\end{align*}
Applying Lemma \ref{lem:NOsubmodEst}, we can then ensure that
\[
\prob{|\potEst(A) - \pot(A)| \ge \delta} = O(M^{-1}),
\]
by using $\Omega(\epsilon^{-2}n^{4}\ln^{4}n\ln M)$ samples for each computation of $\pot$.  By the union bound, we can ensure that $|\potEst(A) - \pot(A)| \le \delta$ holds with high probability for all sets $A$ considered by the algorithm, by setting $M$ appropriately.  In particular, if we evaluate $\potEst$ on any polynomial number of distinct sets $A$, it suffices to make $M$ polynomially small, which requires only  a polynomial number of samples for each evaluation.

\paragraph{Bounding the convergence time of the algorithm:}

We initialize our search with an arbitrary base $\sinit \in \bases{\M}$, and at each step of the algorithm, we restrict our search to those improvements that yield a significant increase in the value of $\pot$.  Specifically, we require that each improvement increases the current value of $\pot$ by at least an additive term $\delta = \frac{\epsilon}{n}\cdot \vmax$.  We now bound the total number of improvements made by the algorithm.

We suppose that all values $\potEst(A)$ computed by the algorithm satisfy
\[\pot(A) - \delta \le \potEst(A) \le \pot(A) + \delta.\]  From the previous discussion, we can ensure that this is indeed the case with high probability.

Let $O_{\pot} = \arg\max_{A \in \bases{\M}}\pot(A)$.  Then, the total number of improvements applied by the algorithm is at most:
\begin{align*}
&{\textstyle \frac{1}{\delta}}(\potEst(O_{\pot}) - \potEst(\sinit)) \\
&\le {\textstyle \frac{1}{\delta}}(\pot(O_{\pot}) - \pot(\sinit) + 2\delta)
\\ &= {\textstyle \frac{1}{\delta}}((1 - e^{-1})\!\cdot\!(\NOsubmod(O_{\pot}) - \NOsubmod(\sinit)) + \lin(O_{\pot}) - \lin(\sinit)  + 2\delta)
\\ &\le {\textstyle \frac{1}{\delta}}((1 - e^{-1})\!\cdot\!\NOsubmod(O_{\pot}) + |\lin(O_{\pot})| + |\lin(\sinit)| + 2\delta)
\\ &\le {\textstyle \frac{1}{\delta}}((1 - e^{-1})\!\cdot\!C\!\cdot\!\submod(O_{\pot})\ln n + |\lin(O_{\pot})| + |\lin(\sinit)| + 2\delta)
\\ &\le {\textstyle \frac{1}{\delta}}((1 - e^{-1})\!\cdot\!C\!\cdot\!\vmax\!\cdot\!n \ln n + n\!\cdot\!\vmax + n\!\cdot\!\vmax + 2\delta)
\\ &= O(\epsilon^{-1}n^{2}\ln n).
\end{align*}
Each improvement step requires $O(n^{2})$ evaluations of $\pot$.  From the discussion in the previous section, setting $M$ sufficiently high will ensures that all of the estimates made for the first $\Omega(\epsilon^{-1}n^{2}\ln n)$ iterations will satisfy our assumptions with high probability, and so the algorithm will converge in polynomial time.  

In order to obtain a deterministic bound on the running time of the algorithm we simply terminate our search if it has not converged in $\Omega(\epsilon^{-1}n^{2}\ln n)$ steps and return the current solution.  Then when the resulting algorithm terminates, with high probability, we indeed have $\potEst(S) - \potEst(S - s_{i} + o_{i}) \le \delta$ for every $i \in [r]$ and so
 \begin{equation*} 
\sum_{i = 1}^{r}[\pot(S) - \pot(S - s_{i} + o_{i})] \le r(\delta + 2\delta) \le 3\epsilon \cdot \vmax.
\end{equation*}
From Lemma \ref{lem:locality-gap}, the set $S$ produced by the algorithm then satisfies
\[
\submod(S) + \lin(S) \ge (1 - e^{-1})\submod(O) + \lin(O) - O(\epsilon) \cdot \vmax,
\]
as required by Theorem \ref{thm:main}.

\section{Submodular Maximization and Supermodular Minimization}
\label{sec:applications}
We now return to the problems of submodular maximization and supermodular minimization with bounded curvature.  We reduce both problems to the general setting introduced in Section \ref{sec:overv-our-appr}.  In both cases, we suppose that we are seeking optimize a function $\obj : 2^{\ground} \to \posreals$ over a given matroid $\M = (\ground,\I)$ and we let $O$ denote any optimal base of $\M$ (i.e., a base of $\M$ that either maximizes or minimizes $\obj$, according to the setting).

\subsection{Submodular Maximization}
\label{sec:subm-maxim}

Suppose that $\obj$ is a monotone increasing submodular function with curvature at most $c \in [0,1]$, and we seek to maximize $\obj$ over a matroid $\M$. 

\begin{theorem}
\label{thm:submod-main}
For every $\epsilon > 0$ and $c \in [0,1]$, there is an algorithm that given a monotone increasing submodular function $f:2^X \rightarrow \posreals$ of curvature $c$ and a matroid $\M = (\ground,\I)$, produces a set $S \in \I$ in polynomial time satisfying
\begin{equation*}
\obj(S) \geq (1 - c/e - O(\epsilon)) \obj(O)
\end{equation*}
for every $O \in \I$, with high probability.
\end{theorem}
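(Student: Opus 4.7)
The plan is to reduce submodular maximization with bounded curvature to the general problem of Theorem~\ref{thm:main} by splitting $\obj$ into a ``linear part'' that captures the curvature-bounded contribution and a submodular remainder. Specifically, I would define
\[
\lin(j) = \obj_{\ground - j}(j), \qquad \submod(A) = \obj(A) - \lin(A) = \obj(A) - \sum_{j\in A}\obj_{\ground-j}(j),
\]
and verify three facts: (i) $\lin$ is a nonnegative linear function; (ii) $\submod$ is nonnegative, monotone increasing, and submodular; (iii) $\obj = \submod + \lin$. Submodularity of $\submod$ is immediate since $\lin$ is modular so the inequality $\submod(A)+\submod(B)\ge \submod(A\cap B)+\submod(A\cup B)$ reduces to the corresponding inequality for $\obj$. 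Monotonicity follows from $\submod_A(j) = \obj_A(j) - \obj_{\ground-j}(j) \ge 0$ by submodularity of $\obj$ (since $A\subseteq \ground - j$ whenever $j\notin A$), which also yields nonnegativity from $\submod(\emptyset)=\obj(\emptyset)\ge 0$.

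Next, I would invoke Theorem~\ref{thm:main} on $(\submod,\lin,\M)$ to obtain in polynomial time, with high probability, a base $S\in\bases{\M}$ satisfying
\[
\submod(S) + \lin(S) \ge (1-e^{-1})\,\submod(O) + \lin(O) - O(\epsilon)\cdot\vmax.
\]
Since $\obj = \submod + \lin$, the left side equals $\obj(S)$, and rewriting the right side using $\submod(O) = \obj(O) - \lin(O)$ gives
\[
\obj(S) \ge (1-e^{-1})\obj(O) + e^{-1}\lin(O) - O(\epsilon)\cdot\vmax.
\]
The crucial step is now to lower bound $\lin(O)$ using the curvature hypothesis. By Lemma~\ref{lem:submod-curv} applied to $\obj$ with $A = O$,
\[
\lin(O) = \sum_{j\in O}\obj_{\ground-j}(j) \;\ge\; (1-c)\,\obj(O),
\]
which substituted above yields
\[
\obj(S) \;\ge\; \bigl[(1-e^{-1}) + e^{-1}(1-c)\bigr]\obj(O) - O(\epsilon)\vmax \;=\; (1 - c/e)\obj(O) - O(\epsilon)\vmax.
\]

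To finish, I need $\vmax = O(\obj(O))$ so that the additive error becomes a multiplicative $O(\epsilon)$ term. This is the only delicate point. Since $\submod(\{j\}) \le \obj(\{j\})$ and $|\lin(\{j\})| = \obj_{\ground-j}(j) \le \obj(\{j\})$, we have $\vmax \le \max_{j\in \ground}\obj(\{j\})$. After discarding matroid loops (which cannot appear in any base, including $O$ or the algorithm's output, and so do not affect the problem), every singleton $\{j\}$ extends to a base $B\in\bases{\M}$, so monotonicity of $\obj$ together with optimality of $O$ gives $\obj(\{j\}) \le \obj(B) \le \obj(O)$. Hence $\vmax \le \obj(O)$ and the final bound becomes $\obj(S) \ge (1 - c/e - O(\epsilon))\obj(O)$, as required. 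The main obstacle in the argument is conceptual rather than technical: identifying the correct linear--submodular splitting so that the linear part, which Theorem~\ref{thm:main} preserves without the $(1-1/e)$ loss, is exactly the piece that curvature guarantees is a $(1-c)$-fraction of $\obj(O)$.
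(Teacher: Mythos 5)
Your proposal is correct and follows essentially the same route as the paper's proof: the identical decomposition $\obj = \submod + \lin$ with $\lin(j) = \obj_{\ground - j}(j)$, the same invocation of Theorem~\ref{thm:main}, and the same use of Lemma~\ref{lem:submod-curv} to obtain $\lin(O) \ge (1-c)\obj(O)$. The one small difference is your justification of $\vmax \le \obj(O)$: you explicitly discard matroid loops so that every remaining singleton extends to a base and monotonicity plus optimality of $O$ apply, whereas the paper asserts the bound more tersely; your version is slightly more careful but amounts to the same thing.
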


\begin{proof}
Define the functions:
\begin{align*}
\lin(A) &= \sum_{j \in A}\obj_{\ground - j}(j) \\
\submod(A) &= \obj(A) - \lin(A).
\end{align*}
Then, $\lin$ is linear and $\submod$ is submodular, monotone increasing, and nonnegative (as verified in Lemma \ref{lem:submod-app} of the appendix).  Moreover, because $\obj$ has curvature at most $c$, Lemma \ref{lem:submod-curv} implies that for any set $A \subseteq \ground$, $$\textstyle \lin(A) = \sum_{j \in A}\obj_{\ground - j}(j) \ge (1 - c)\obj(A).$$

In order to apply Theorem \ref{thm:main} we must bound the term $\vmax$.  By optimality of $O$ and non-negativity of $\lin$ and $\submod$, we have $\vmax \le \submod(O) + \lin(O) \le \obj(O)$.  From Theorem \ref{thm:main}, we can find a solution $S$ satisfying:
\begin{align*}
\obj(S) &= \submod(S) + \lin(S) \\
&\ge \left(1 - e^{-1}\right)\submod(O) + \lin(O) - O(\epsilon)\cdot \obj(O) \\
           &= \left(1 - e^{-1}\right)\obj(O) + e^{-1}\lin(O) - O(\epsilon) \cdot \obj(O) \\
           &\ge \left(1 - e^{-1}\right)\obj(O) + (1-c)e^{-1}\obj(O) - O(\epsilon) \cdot\obj(O) \\
           &= \left(1 - ce^{-1} - O(\epsilon)\right)\obj(O).
\end{align*}
\end{proof}

\subsection{Supermodular Minimization}
\label{sec:superm-minim}

Suppose that $\obj$ is a monotone decreasing supermodular function with curvature at most $c \in [0,1)$ and we seek to minimize $\obj$ over a matroid $\M$. 

\begin{theorem}
\label{thm:supermod-main}
For every $\epsilon > 0$ and $c \in [0,1)$, there is an algorithm that given a monotone decreasing supermodular function $f:2^X \rightarrow \posreals$ of curvature $c$ and a matroid $\M = (\ground,\I)$, produces a set $S \in \I$ in polynomial time satisfying
\begin{equation*}
\obj(S) \leq \left(1 + \frac{c}{1-c} \cdot e^{-1} + \frac{1}{1-c}\cdot O(\epsilon) \right) \obj(O)
\end{equation*}
for every $O \in \I$, with high probability.
\end{theorem}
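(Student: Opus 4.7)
The plan is to reduce to the meta-problem of Theorem \ref{thm:main}, mirroring the proof of Theorem \ref{thm:submod-main} with two key adjustments: I would work on the \emph{dual matroid} $\Mdual$ so that Lemma \ref{lem:supmod-curv} bounds the relevant linear term by $\obj(O)/(1-c)$, and I would assume without loss of generality that $\obj(\ground)=0$ (subtracting this constant from $\obj$ preserves nonnegativity, monotonicity, supermodularity, marginals, curvature, and the set of optimal bases; a multiplicative bound for the shifted function transfers back to the original because $\obj(\ground)\ge 0$).

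On $\Mdual$, define the linear function $\lin(T) = \sum_{j\in T}\obj_\emptyset(j)$, which is everywhere nonpositive (the ``negative, monotone decreasing'' linear function anticipated in Section \ref{sec:overv-our-appr}), and the residual $\submod(T) = -\obj(\ground\setminus T) - \lin(T)$. By construction $\submod(T) + \lin(T) = -\obj(\ground\setminus T) = -\obj(S)$ for $S = \ground\setminus T$, so maximizing $\submod+\lin$ over $\bases{\Mdual}$ is equivalent to minimizing $\obj$ over $\bases{\M}$. I would then verify the hypotheses of Theorem \ref{thm:main}: a supermodular telescoping argument (using $\obj(\ground)=0$) yields $\obj(\ground\setminus T) \le |\lin(T)|$, so $\submod\ge 0$; the marginal $\submod(T+i)-\submod(T) = |\obj_\emptyset(i)| - |\obj_{(\ground\setminus T)-i}(i)|$ is nonnegative because magnitudes of marginals of a monotone decreasing supermodular function shrink as the base set grows; and submodularity holds because $T\mapsto \obj(\ground\setminus T)$ is supermodular in $T$ via the complement map.

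Applying Theorem \ref{thm:main} on $\Mdual$ with reference base $\Odual = \ground\setminus O$ yields $T\in\bases{\Mdual}$ with
\[
-\obj(S) \;\ge\; (1-e^{-1})\submod(\Odual) + \lin(\Odual) - O(\epsilon)\,\vmax.
\]
Substituting $\submod(\Odual) = -\obj(O) + |\lin(\Odual)|$ and rearranging gives $\obj(S) \le (1-e^{-1})\obj(O) + e^{-1}|\lin(\Odual)| + O(\epsilon)\vmax$. Lemma \ref{lem:supmod-curv} applied with $A = \Odual$ (so $\ground\setminus A = O$) bounds $|\lin(\Odual)| \le \obj(O)/(1-c)$, and combining with the identity $1 - e^{-1} + \frac{e^{-1}}{1-c} = 1 + \frac{c}{e(1-c)}$ produces the claimed approximation ratio.

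The main obstacle will be bounding $\vmax$ so that $O(\epsilon)\vmax$ fits within the $\frac{1}{1-c}\cdot O(\epsilon)\obj(O)$ error allowance. In the submodular maximization proof this is immediate from nonnegativity of both parts plus optimality of $O$, but here $\lin\le 0$, so the analogous argument only gives $\submod(\{j\}) \le \submod(\Odual) + |\lin(B)|$ for any base $B\in\bases{\Mdual}$ containing $j$. I would choose $B$ by a Brualdi-style exchange with $\Odual$ (so that $\ground\setminus B$ is a base of $\M$ differing from $O$ by one element) and bound $|\lin(B)|$ via Lemma \ref{lem:supmod-curv} applied to that base, yielding $\vmax = O(\obj(O)/(1-c))$ as required.
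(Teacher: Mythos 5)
Your decomposition and reduction are the same as the paper's: work in the dual matroid $\Mdual$ with $\lin(T)=\sum_{j\in T}\obj_\emptyset(j)$ and $\submod(T)=-\lin(T)-\obj(\ground\setminus T)$, verify the hypotheses of Theorem~\ref{thm:main}, then translate the guarantee back via $\submod(T)+\lin(T)=-\obj(\ground\setminus T)$. Your explicit normalization $\obj(\ground)=0$ is a genuine clarification (the paper's Lemma~\ref{lem:supmod-app} silently uses it to conclude $\submod(\emptyset)=0$), and your marginal computations and algebra are all correct.

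The gap is in bounding $\vmax$, and it is a real one, not a technicality. You propose to pick, for each $j$, a base $B\in\bases{\Mdual}$ with $j\in B$ obtained from $\Odual$ by a single Brualdi exchange, and then bound $|\lin(B)|\leq \frac{1}{1-c}\obj(\ground\setminus B)$ via Lemma~\ref{lem:supmod-curv}. But $\obj(\ground\setminus B)$ is \emph{not} controlled by $\obj(O)$ even when $\ground\setminus B$ and $O$ differ in one element: $O$ is the minimizer, and swapping a single element out of a minimizing base can increase the objective by an arbitrary factor. Concretely, on $\ground=\{1,2,3\}$ with the rank-$2$ uniform matroid, one can build a monotone decreasing supermodular $\obj$ with $\obj(\ground)=0$, curvature $c=1/2$, $\obj(\{1,2\})=1/2$ (so $O=\{1,2\}$, $\Odual=\{3\}$), but $\obj_\emptyset(1)=\obj_\emptyset(2)=-M$ and $\obj(\{1,3\})=\obj(\{2,3\})=M-1$ for large $M$. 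Then $\vmax\geq\linmax\geq M$, which is not $O(\obj(O)/(1-c))=O(1)$, and your one-element-exchange base $B=\{1\}$ gives $\obj(\ground\setminus B)=\obj(\{2,3\})=M-1$, so Lemma~\ref{lem:supmod-curv} only yields $|\lin(B)|\leq 2(M-1)$, which is useless. The trouble is precisely that elements \emph{inside} $O$ (hence outside $\Odual$) can have arbitrarily large $|\obj_\emptyset(j)|$ while $\obj(O)$ stays tiny.

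The paper resolves this with an extra algorithmic ingredient your proposal omits: partial enumeration over a guess $e$ of the element $\emax=\arg\max_{j\in\Odual}\max(\submod(j),|\lin(j)|)$, followed by \emph{restricting the ground set} to $\ground_e=\{j:\submod(j)\leq\vmax_e,\ |\lin(j)|\leq\vmax_e\}$ and running the algorithm on $\Mdual\restrict\ground_e$. In the correct iteration $e=\emax$, the restricted ground set still contains all of $\Odual$, every remaining element satisfies $\max(\submod(j),|\lin(j)|)\leq\vmax_e$, and $\vmax_e\leq\submod(\Odual)+|\lin(\Odual)|\leq -2\lin(\Odual)\leq\frac{2}{1-c}\obj(O)$ by Lemma~\ref{lem:supmod-curv}. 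Elements with huge $|\lin(j)|$ (like those in $O$) are simply deleted rather than bounded. Without this restriction step, the $O(\epsilon)\cdot\vmax$ error in Theorem~\ref{thm:main} can swamp $\obj(O)$, and the claimed approximation ratio does not follow.
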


\begin{proof}
Define the linear and submodular functions:
\begin{align*}
\lin(A) &= \sum_{j \in A}\obj_{\emptyset}(j) \\
\submod(A) &= -\lin(A) - \obj(\ground \setminus A).
\end{align*}
Because $f$ is monotone decreasing, we have $\obj_{\emptyset}(j) \le 0$ and so $\lin(A) \le 0$ for all $A \subseteq \ground$.  Thus, $\lin$ is a non-positive, decreasing linear function.  However, as we verify in Lemma \ref{lem:supmod-app} of the appendix, $\submod$ is submodular, monotone increasing, and nonnegative.

We shall consider the problem of \emph{maximizing} $\submod(S) + \ell(S) = -\obj(\ground \setminus S)$ in the \emph{dual} matroid $\Mdual$, whose bases correspond to complements of bases of $\M$.  We compare our solution $S$ to this problem to the base $\Odual = \ground \setminus O$ of $\Mdual$.  Again, in order to apply Theorem \ref{thm:main}, we must bound the term $\vmax$.  Here, because $\lin(A)$ is non-positive, we cannot bound $\vmax$ directly as in the previous section.  Rather, we proceed by partial enumeration.  Let $\emax = \arg\max_{j \in \Odual}\max(\submod(j),|\lin(j)|)$.  We iterate through all possible guesses $e \in \ground$ for $\emax$, and for each such $e$ consider $\vmax_{e} = \max(\submod(e),|\lin(e)|)$.  We set $\ground_{e}$ to be the set $\{j \in \ground : \submod(j) \le \vmax_{e}\ \land\ |\lin(j)| \le \vmax_{e} \}$, and consider the matroid $\Mdual_{e} = \Mdual \restrict \ground_{e}$, obtained by restricting $\Mdual$ to the ground set $\ground_{e}$.  For each $e$ satisfying $\rank{\Mdual_{e}}(\ground_{e}) = \rank{\Mdual}(\ground)$, we apply our algorithm to the problem $\max\{\submod(A) + \lin(A) : A \in \Mdual_{e}\}$, and return the best solution $S$ obtained.  Note since $\rank{\Mdual_{e}}(\ground_{e}) = \rank{\Mdual}(\ground)$, the set $S$ is also a base of $\Mdual$ and so $X \setminus S$ is a base of $\M$.  

Consider the iteration in which we correctly guess $e = \emax$.  In the corresponding restricted instance we have $\submod(j) \le \vmax_{e} = \vmax$ and $|\lin(j)| \le \vmax_{e} = \vmax$ for all $j \in \ground_{e}$.  Additionally, $\Odual \subseteq \ground_{e}$ and so $\Odual \in \bases{\M_{e}}$, as required by our analysis.   Finally, from the definition of $\submod$ and $\lin$, we have $\obj(O) = -\lin(\Odual) - \submod(\Odual)$.  Since $\emax \in \Odual$, and $\lin$ is nonpositive while $\obj$ is nonnegative, 
\begin{align*}
\vmax \le \submod(\Odual) + |\lin(\Odual)| &= -\lin(\Odual) - \obj(O) - \lin(\Odual) \\
&\le -2\lin(\Odual).
\end{align*}
Therefore, by Theorem \ref{thm:main}, the base $S$ of $\Mdual$ returned by the algorithm satisfies:
\begin{equation*}
\label{eq:supmod1}
\submod(S) + \ell(S) \ge (1 - e^{-1})\submod(\Odual) + \ell(\Odual) + O(\epsilon) \cdot \ell(\Odual)
\end{equation*}
Finally, since $\obj$ is supermodular with curvature at most $c$, Lemma \ref{lem:supmod-curv} implies that for all $A \subseteq \ground$,
$$\textstyle -\lin(A) = -\sum_{j \in A}\obj_{\emptyset}(j) \le \frac{1}{1-c}f(\ground \setminus A).$$
Thus, we have
\begin{align*}
\obj(\ground \setminus S) 
&= -\submod(S) - \lin(S) \\
&\le -\left(1-e^{-1}\right)\submod(\Odual) - \lin(\Odual) - O(\epsilon) \cdot \ell(\Odual) \\
&= \left(1-e^{-1}\right)\obj(O) - \left(e^{-1} + O(\epsilon)\right)\lin(\Odual) \\
&\le \left(1 - e^{-1}\right)\obj(O) + \left(e^{-1} + O(\epsilon)\right)\cdot\textstyle \frac{1}{1-c}\cdot \obj(O) \\
&= \left(1 + \textstyle\frac{c}{1-c}\cdot e^{-1} + \frac{1}{1-c}\cdot O(\epsilon)\right)\obj(O).
\end{align*}
\end{proof}
We note that because the error term depends on $\frac{1}{1-c}$, our result requires that $c$ is bounded away from $1$ by a constant.

\section{Inapproximability Results}
\label{sec:inappr-results}

We now show that our approximation guarantees are the best achievable using only a polynomial number of function evaluations, even in the special case that $\M$ is a uniform matroid (i.e., a cardinality constraint).  Nemhauser and Wolsey \cite{Nemhauser1978} considered the problem of finding a set $S$ of cardinality at most $r$ that maximizes a monotone submodular function. They give a class of functions for which obtaining a $(1 - e^{-1} + \epsilon)$-approximation for any constant $\epsilon > 0$ requires a super-polynomial number of function evaluations. Our analysis uses the following additional property, satisfied by every function $f$ in their class: let $p = \max_{i \in \ground}f_{\emptyset}(i)$, and let $O$ be a set of size $r$ on which $f$ takes its maximum value.  Then, $f(O) = rp$.
\begin{theorem}
\label{lem:submod-inapprox}
For any constant $\delta > 0$ and $c \in (0,1)$, there is no $(1 - c e^{-1} + \delta)$-approximation algorithm for the problem $\max \{ \hat{f}(S) : |S| \le r \}$, where $\hat{f}$ is a monotone increasing submodular function with curvature at most $c$, that evaluates $\hat{f}$ on only a polynomial number of sets.
\end{theorem}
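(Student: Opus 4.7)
The plan is to reduce to the original Nemhauser--Wolsey inapproximability by blending their hard instance with a linear function of carefully chosen slope. Let $f : 2^X \to \posreals$ be the monotone submodular function from \cite{Nemhauser1978} for which no polynomial-query algorithm achieves a $(1 - 1/e + \epsilon)$-approximation to $\max\{f(S) : |S| \le r\}$, and recall the extra property stated in the excerpt: $f(O) = rp$, where $p = \max_{i \in X} f_\emptyset(i)$ and $O$ is an optimal $r$-set. Given the target curvature $c \in (0,1)$, I would define
\begin{equation*}
\hat{f}(S) = c \cdot f(S) + (1-c) \cdot p \cdot |S|.
\end{equation*}
A value oracle for $\hat{f}$ is computed from a single query to $f$ together with $|S|$, so polynomial query complexity transfers between the two problems.

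Next I would verify that $\hat{f}$ satisfies the hypotheses of the theorem. Monotonicity and submodularity are immediate, since $\hat{f}$ is a nonnegative combination of a monotone submodular function and a monotone linear function. For the curvature bound, using $0 \le f_{X-j}(j) \le f_\emptyset(j) \le p$ for every $j \in X$,
\begin{equation*}
\frac{\hat{f}_{X-j}(j)}{\hat{f}_\emptyset(j)} = \frac{c \cdot f_{X-j}(j) + (1-c)p}{c \cdot f_\emptyset(j) + (1-c)p} \ge \frac{(1-c)p}{p} = 1 - c,
\end{equation*}
so $\hat{f}$ has total curvature at most $c$ by \eqref{eq:curvature}. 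The slope $(1-c)p$ is chosen precisely so that even when the hard instance has $f_{X-j}(j) = 0$ (so $f$ itself has curvature $1$), the linear perturbation pulls the worst-case ratio up to exactly $1 - c$.

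Finally I would translate an approximation algorithm for $\hat{f}$ back into one for $f$. Since $\hat{f}$ is monotone, any algorithm may be assumed to output $S$ with $|S| = r$. Using $\hat{f}(O) = c \cdot rp + (1-c)pr = rp$, a $(1 - c/e + \delta)$-approximation would produce $S$ satisfying
\begin{equation*}
c \cdot f(S) + (1-c)pr \ge (1 - c/e + \delta) \cdot rp,
\end{equation*}
which after rearranging gives $f(S) \ge \left(1 - 1/e + \delta/c\right) \cdot f(O)$. Since $c < 1$, the quantity $\delta/c$ is a positive constant independent of $n$, so such an algorithm would beat the $(1 - 1/e)$ threshold of \cite{Nemhauser1978}, a contradiction.

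No step is expected to present a real obstacle: the curvature calculation is the only nonobvious piece, and it reduces to a single line once the correct mixing weights are identified. The whole argument is a clean black-box reduction that exploits the extra property $f(O) = rp$ that the excerpt highlights about the Nemhauser--Wolsey construction.
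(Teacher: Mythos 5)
Your proof is correct and essentially identical to the paper's: the paper uses $\hat{f}(A) = f(A) + \frac{1-c}{c}|A|p$, which is just your $\hat{f}$ scaled by $1/c$, and the curvature verification, the use of $f(O)=rp$, and the algebraic rearrangement to contradict Nemhauser--Wolsey all match step for step. The one thing you glossed over that the paper states explicitly is that $\hat{f}$ is nonnegative, but this is immediate from your construction.
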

\begin{proof}
Let $f$ be a function in the family given by \cite{Nemhauser1978} for the cardinality constraint $r$, and let $O$ be a set of size $r$ on which $f$ takes its maximum value.  Consider the function: \[\hat{f}(A) = f(A) + \frac{1 - c}{c}\cdot |A|\cdot p.\]   In Lemma \ref{lem:submod-inapprox-curv} of the appendix, we show that $\hat{f}$ is monotone increasing, submodular, and nonnegative with curvature at most $c$.

We consider the problem $\max\{\hat{f}(S) : |S| \le r\}$.  Let $\alpha = (1 - ce^{-1} + \delta)$, and suppose that some algorithm returns a solution $S$ satisfying $\hat{f}(S) \ge \alpha \cdot \hat{f}(O)$, evaluating $\hat{f}$ on only a polynomial number of sets.  Because $f$ is monotone increasing, we assume without loss of generality that $|S| = r$. Then,
\begin{equation*}
f(S) + \frac{1 - c}{c}\cdot rp \ge \alpha \cdot f(O) + \alpha \cdot \frac{1 - c}{c}\cdot rp
= \alpha \cdot f(O) + \alpha \cdot \frac{1 - c}{c}\cdot f(O) = \frac{1}{c}\cdot \alpha \cdot f(O),
\end{equation*}
and so
\begin{align*}
f(S) &\ge \frac{1}{c}\cdot \alpha \cdot f(O) - \frac{1-c}{c}\cdot rp \\
&= \left(\frac{1}{c}\cdot \alpha - \frac{1-c}{c}\right)\cdot f(O) \\
 &= \left(\frac{1}{c} - e^{-1} + \frac{\delta}{c} - \frac{1-c}{c}\right)\cdot f(O) \\ 
&= \left(1 - e^{-1} + \frac{\delta}{c}\right)\cdot f(O). 
\end{align*}
Because each evaluation of $\hat{f}$ requires only a single evaluation of $f$, this contradicts the negative result of \cite{Nemhauser1978}.
\end{proof}

\begin{theorem}
\label{lem:supmod-inapprox}
For any constant $\delta > 0$ and $c \in (0,1)$, there is no $(1 + \frac{c}{1-c}e^{-1} - \delta)$-approximation algorithm for the problem $\min \{ \hat{f}(S) : |S| \le r \}$, where $\hat{f}$ is a monotone decreasing supermodular function with curvature at most $c$, that evaluates $\hat{f}$ on only a polynomial number of sets.
\end{theorem}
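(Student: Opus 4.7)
My plan is to mirror the reduction used in Theorem~\ref{lem:submod-inapprox}, but with a ``complementary'' construction adapted to supermodular minimization. Given an NW hard instance $f$ for monotone submodular maximization under $|S| \le r$, with $p = \max_{i \in \ground} f_\emptyset(i)$ and an optimum $O$ of size $r$ satisfying $f(O) = rp$, I would first arrange that $f(\ground) = rp$ as well by truncating $f$ at the value $rp$ (truncation at a constant preserves monotonicity and submodularity, and does not weaken the NW hardness, since it acts as the identity on sets of size $\le r$), and fix $n = |\ground| = 2r$. The hard supermodular instance would then be
\[
\hat{f}(S) \;=\; f(\ground) - f(S) + \tfrac{1-c}{c}(n-|S|)\,p.
\]

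The first step is to verify $\hat{f}$ satisfies the hypotheses of the theorem. Nonnegativity is immediate from the monotonicity of $f$. The marginal is $\hat{f}_A(j) = -f_A(j) - \tfrac{1-c}{c}p \le 0$, so $\hat{f}$ is monotone decreasing, and supermodularity follows because $-f$ is supermodular and the remainder is linear in $|S|$. For the curvature, expanding the ratio $\hat{f}_{\ground-j}(j)/\hat{f}_\emptyset(j)$ and requiring it to be at least $1-c$ reduces, after clearing denominators, to $f_{\ground-j}(j) \ge (1-c)\bigl(f_\emptyset(j)-p\bigr)$ for every $j$; this holds because the LHS is nonnegative while the RHS is nonpositive by the definition of $p$.

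The second step is the reduction itself. Assume a $(1+\tfrac{c}{1-c}e^{-1}-\delta)$-approximation $S$ for $\min\{\hat{f}(S):|S|\le r\}$ can be computed with polynomially many queries; we may assume $|S|=r$ since $\hat{f}$ is decreasing. With $n=2r$ and $f(\ground)=rp$, we have $\hat{f}(O) = \tfrac{1-c}{c}rp$ and $\hat{f}(S) = rp - f(S) + \tfrac{1-c}{c}rp$, so the approximation guarantee rearranges to $f(S) \ge rp - (\alpha-1)\tfrac{1-c}{c}rp$. Using $(\alpha-1)\tfrac{1-c}{c} = e^{-1} - \delta\tfrac{1-c}{c}$ for $\alpha = 1+\tfrac{c}{1-c}e^{-1}-\delta$, this simplifies to $f(S) \ge \bigl(1-e^{-1}+\delta\tfrac{1-c}{c}\bigr)f(O)$, a $(1-e^{-1}+\delta')$-approximation for the NW instance with $\delta' = \delta(1-c)/c > 0$. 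Since each query of $\hat{f}$ uses a single query of $f$, this contradicts \cite{Nemhauser1978}.

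The main subtlety is the curvature check, and correspondingly the choice of the coefficient $\tfrac{1-c}{c}$ on the linear compensation term: this value is exactly what brings the curvature of $\hat{f}$ down to $c$, and it is also precisely the value that makes the algebra in the reduction convert the supermodular minimization threshold $1+\tfrac{c}{1-c}e^{-1}$ into the NW submodular maximization threshold $1-1/e$.
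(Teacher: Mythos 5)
Your construction is conceptually parallel to the paper's---both add a linear ``compensation'' term to an NW-hard instance to force the curvature down to $c$, and both algebraic reductions convert the supermodular threshold $1 + \tfrac{c}{1-c}e^{-1}$ back to the NW threshold $1 - e^{-1}$---but the specific form of $\hat{f}$ is different, and that difference introduces a gap. You define $\hat{f}(S) = f(\ground) - f(S) + \tfrac{1-c}{c}(n - |S|)p$ and consider $\min\{\hat{f}(S) : |S| \le r\}$, whereas the paper takes $\hat{f}(A) = \tfrac{p}{c}|\ground \setminus A| - f(\ground \setminus A)$ and works over the \emph{complement} budget $|A| \le n - r$. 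The complementation is not cosmetic: with your form, the arithmetic only closes if $\hat{f}(O) = \tfrac{1-c}{c}(n-r)p$ matches the quantity $\tfrac{1-c}{c}rp$ that your simplification ``$f(S) \ge rp - (\alpha - 1)\tfrac{1-c}{c}rp$'' implicitly uses, which forces $n - r = r$, i.e., $n = 2r$. You assert this by fiat (``fix $n = |\ground| = 2r$''), but there is no reason the Nemhauser--Wolsey hard instance can be taken with $|\ground| = 2r$, and in fact the $1 - e^{-1}$ information-theoretic lower bound requires $n \gg r$. For $n = 2r$, even a uniformly random $r$-set already meets the hidden set $O$ in roughly half its elements, so the achievable ratio on NW-style instances is strictly better than $1 - e^{-1}$; the hardness you would be contradicting simply is not available at $n = 2r$. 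Nor can you fix this by tuning the coefficient $\beta$ on the linear term: the curvature bound forces $\beta \ge \tfrac{1-c}{c}$, while the reduction needs $\beta \approx \tfrac{(1-c)r}{c(n-r)}$, and these are simultaneously satisfiable only when $n \le 2r$.

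The paper's complemented construction sidesteps all of this: $\hat{f}(\ground \setminus O) = \tfrac{p}{c}|O| - f(O) = \tfrac{1-c}{c}rp$ holds for \emph{any} $n$, because the budget $n - r$ and the complement operation absorb the ambient ground set size. Nonnegativity ($f(S) \le |S| p \le \tfrac{p}{c}|S|$) is also immediate, so no truncation is needed either. Your truncation step is in itself sound (truncation at a constant preserves monotonicity and submodularity, and since $\max_{|S| \le r} f(S) = rp$ it is the identity on sets of size at most $r$), and your curvature verification is correct given the form you chose, so if you replaced your $\hat{f}$ by the complemented one and carried the same calculus through you would recover the paper's proof.
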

\begin{proof}
Again, let $f$ be a function in the family given by \cite{Nemhauser1978} for the cardinality constraint $r$.  Let $O$ be a set of size $r$ on which $f$ takes its maximum value, and recall that $f(O) = rp$, where $p = \max_{i \in \ground} f_{\emptyset}(i)$.   Consider the function: \[\hat{f}(A) = \frac{p}{c}\cdot |\ground \setminus A| - f(\ground \setminus A).\]  In Lemma \ref{lem:supmod-inapprox-curv} of the appendix we show that $\hat{f}$ is monotone decreasing, supermodular, and nonnegative with curvature at most $c$.   

We consider the problem $\min\{\hat{f}(A) : |A| \le n - r\}$.  Let $\alpha = (1 + \frac{c}{1-c}e^{-1} - \delta)$, and suppose that some algorithm returns a solution $A$, satisfying $\hat{f}(A) \le \alpha \cdot \hat{f}(\ground \setminus O)$, evaluating $\hat{f}$ on only a polynomial number of sets.  

Suppose we run the algorithm for minimizing $\hat{f}$ and return the set $S = \ground\setminus A$.  Because $\hat{f}$ is monotone decreasing, we assume without loss of generality that $|A| = n - r$ and so $|S| = r$.  Then,
\begin{equation*}
\frac{rp}{c} - f(S) \le \alpha\left(\frac{rp}{c} - f(O)\right) 
= \alpha\left(\frac{f(O)}{c} - f(O)\right) = \alpha \cdot \frac{1-c}{c} \cdot f(O),
\end{equation*}
and so
\begin{align*}
 f(S) &\ge \frac{rp}{c} - \alpha \cdot \frac{1 - c}{c}\cdot f(O) \\
&= \left(\frac{1}{c} - \alpha \cdot \frac{1 - c}{c}\right)\cdot f(O) \\
&= \frac{1 - (1 - c) - ce^{-1} + (1-c)\delta}{c} \cdot f(O) \\
&= \left(1 - e^{-1} + \frac{1-c}{c}\cdot \delta\right)\cdot f(O).
\end{align*}
Again, since each evaluation of $\hat{f}$ requires only one evaluation of $f$, this contradicts the negative result of \cite{Nemhauser1978}.
\end{proof}

\section{Optimizing Monotone Nonnegative Functions of Bounded Curvature}
Now we consider the problem of maximizing (respectively, minimizing) an arbitrary monotone increasing (respectively, monotone decreasing) nonnegative  function $f$ of bounded curvature subject to a single matroid constraint.  We do not require that $f$ be supermodular or submodular, but only that there is a bound on the following generalized notion of curvature.

Let $f$ be an arbitrary monotone increasing or monotone decreasing function.  We define the curvature $c$ of $f$ as
\begin{equation}
\label{eq:gen-curvature}
c = 1 - \min_{j \in \ground} \min_{S,T \subseteq \ground \setminus \{j\}}\frac{f_S(j)}{f_T(j)}.
\end{equation}
Note that in the case that $f$ is either monotone increasing and submodular or monotone decreasing and supermodular, the minimum of $\frac{f_S(j)}{f_T(j)}$ over $S$ and $T$ is attained when $S = \ground - j$ and $T = \emptyset$.  Thus,  \eqref{eq:gen-curvature} agrees with the standard definition of curvature given in \eqref{eq:curvature}.  Moreover, if monotonically increasing $f$ has curvature at most $c$ for some $c \in [0,1]$, then for all $j \in \ground$ we have 
\begin{equation}
\label{eq:gen-curvature-2}
(1-c)f_B(j) \le f_A(j),
\end{equation}
for any $j \in \ground$, and $A,B \subseteq \ground \setminus \{j\}$.    
Analogously,  for monotonically decreasing function $f$ we have
\begin{equation}
\label{eq:gen-curvature-3}
(1-c)f_B(j) \ge f_A(j),
\end{equation}
for any $j \in \ground$, and $A,B \subseteq \ground \setminus \{j\}$.
As with the standard notion of curvature for submodular and supermodular functions, when $c = 0$, we find that \eqref{eq:gen-curvature-2} and  \eqref{eq:gen-curvature-3} require $f$ to be a linear function, while when $c = 1$, they require only that $f$ is monotone increasing or monotone decreasing, respectively.

First, we consider the case in which we wish to maximize a monotone increasing function $\obj$ subject to a matroid constraint $\M = (\ground,\I)$.  Suppose that we run the standard greedy algorithm, which at each step adds to the current solution $S$ the element $e$ yielding the largest marginal gain in $\obj$, subject to the constraint $S + e \in \I$.
\begin{theorem}
\label{thm:arbitrary-max}
Suppose that $\obj$ is a monotone increasing function with curvature at most $c \in [0,1]$, and $\M$ is a matroid.  Let $S \in \bases{\M}$ be the base produced by the standard greedy maximization algorithm on $\obj$ and $\M$, and let $O \in \bases{\M}$ be any base of $\M$.  Then, 
\[\obj(S) \ge (1 -c)\obj(O).\]
\end{theorem}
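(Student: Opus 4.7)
The plan is to combine a matroid-pairing argument with the generalized curvature inequality. Let $S = (s_1, \ldots, s_r)$ be the greedy base listed in the order the elements are selected, and set $S_i = \{s_1, \ldots, s_i\}$. My first step is to pair up $S$ and $O$ via Brualdi's lemma (Lemma \ref{lem:brualdi}), choosing the bijection $\pi : S \to O$ so that it acts as the identity on $S \cap O$. Such a refined pairing exists by a standard matroid trick: first fix $\pi(s) = s$ for $s \in S \cap O$, then apply Brualdi's lemma inside the matroid contracted by $S \cap O$ to the two residual bases $S \setminus O$ and $O \setminus S$. This ensures $S - s + \pi(s) \in \bases{\M}$ for every $s \in S$ and, as I explain below, that $\pi(s_i) \notin S_{i-1}$ for every $i$.

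With this pairing in hand, I would reindex $O$ as $(o_1, \ldots, o_r)$ with $o_i := \pi(s_i)$, so that $O_{i-1} := \{o_1, \ldots, o_{i-1}\}$ consists of $\pi(s_1), \ldots, \pi(s_{i-1})$ and in particular does not contain $o_i$. The key observation is that $S_{i-1} + o_i \subseteq S - s_i + o_i$, and the right-hand side is a base of $\M$, so $S_{i-1} + o_i \in \I$. Thus $o_i$ was a valid candidate for the greedy algorithm at step $i$, which gives $f_{S_{i-1}}(s_i) \ge f_{S_{i-1}}(o_i)$. Next, since $o_i \notin S_{i-1}$ and $o_i \notin O_{i-1}$, the generalized curvature bound \eqref{eq:gen-curvature-2} gives $f_{S_{i-1}}(o_i) \ge (1-c)\, f_{O_{i-1}}(o_i)$. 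Chaining these two inequalities, summing over $i$, and telescoping via $\sum_{i=1}^{r} f_{O_{i-1}}(o_i) = f(O)$ yields
\[
f(S) = \sum_{i=1}^{r} f_{S_{i-1}}(s_i) \ge \sum_{i=1}^{r} f_{S_{i-1}}(o_i) \ge (1-c) \sum_{i=1}^{r} f_{O_{i-1}}(o_i) = (1-c)\, f(O).
\]

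The main obstacle is the subtle requirement that $o_i \notin S_{i-1}$; without it the curvature inequality \eqref{eq:gen-curvature-2} breaks down, because $f_{S_{i-1}}(o_i) = 0$ by monotonicity while $f_{O_{i-1}}(o_i)$ can be strictly positive. A naive application of Brualdi's lemma does not preclude $\pi(s_i) \in S_{i-1}$ when $o_i$ lies in $S \cap O$. The identity-on-the-intersection refinement sidesteps this cleanly: if $o_i \in S$ then $o_i \in S \cap O$, so $\pi^{-1}(o_i) = o_i$; but also $\pi^{-1}(o_i) = s_i$ by construction, forcing $s_i = o_i$ and hence $o_i \notin S_{i-1}$. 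Once this point is settled, the rest of the argument is the clean three-line chain of greedy optimality, generalized curvature, and telescoping displayed above.
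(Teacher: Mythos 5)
Your proof follows the paper's argument essentially line for line: pair $S$ and $O$ via Brualdi's lemma, use greedy optimality to compare $f_{S_{i-1}}(s_i)$ with $f_{S_{i-1}}(o_i)$, use the generalized curvature inequality to compare $f_{S_{i-1}}(o_i)$ with $f_{O_{i-1}}(o_i)$, and telescope. The one place you add extra machinery is the ``identity on $S\cap O$'' refinement of Brualdi's lemma, which you invoke to guarantee $o_i\notin S_{i-1}$. You are right that this property is needed to apply \eqref{eq:gen-curvature-2}, and the paper passes over it silently, but the refinement is in fact automatic rather than something you must engineer. Any bijection $\pi:S\to O$ satisfying $S - s + \pi(s) \in \bases{\M}$ already has $\pi(s)=s$ for every $s\in S\cap O$: if $\pi(s)\in S\setminus\{s\}$, then $S - s + \pi(s) = S - s$ has only $r-1$ elements, so it cannot be a base; hence $\pi(s)=s$ or $\pi(s)\notin S$ for every $s\in S$, and since $\pi$ is a bijection onto $O$, the elements of $S\cap O$ must all be fixed. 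So $o_i\notin S_{i-1}$ comes for free, and the detour through the contracted matroid, while correct, is unnecessary.

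One small slip in the telescoping: you write $f(S) = \sum_{i=1}^{r} f_{S_{i-1}}(s_i)$ and $\sum_{i=1}^{r} f_{O_{i-1}}(o_i) = f(O)$, but both sums equal the respective value minus $f(\emptyset)$. The conclusion survives because $f(\emptyset)\ge 0$ and $1-c\le 1$ give $(1-c)f(\emptyset)\le f(\emptyset)$; the paper keeps the $f(\emptyset)$ terms explicit for exactly this reason, and you should too.
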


\begin{proof}
Let $r$ be rank of $\M$.  Let $s_i$ be the $i$th element picked by the greedy algorithm, and let $S_i$ be the set containing the first $i$ elements picked by the greedy algorithm.  We use the bijection guaranteed by Lemma \ref{lem:brualdi} to order the elements of $o_i$ of $O$ so that $S - s_i  + o_i \in \I$ for all $i \in [r]$, and let $O_i = \{o_j : j \le i\}$ be the set containing the first $i$ elements of $O$ in this ordering.  Then, 
\begin{align*}
(1-c)f(O) &= (1-c)f(\emptyset) + (1-c)\sum_{i = 1}^rf_{O_{i-1}}(o_i) \\
&\le f(\emptyset) + \sum_{i = 1}^rf_{S_{i-1}}(o_i) \\
&\le f(\emptyset) + \sum_{i = 1}^rf_{S_{i-1}}(s_i) \\
&= f(S).
\end{align*}
The first inequality follows from \eqref{eq:gen-curvature-2} and $f(\emptyset) \ge 0$.  The last inequality is due to the fact that $S_{i-1} + o_i \in \I$ but $s_i$ was chosen by the greedy maximization algorithm in the $i$th round.
\end{proof}

Similarly, we can consider the problem of finding a base of $\M$ that minimizes $f$.  In this setting, we again employ a greedy algorithm, but at each step choose the element $e$ yielding the \emph{smallest} marginal gain in $\obj$, terminating only when no element can be added to the current solution.  We call this algorithm the standard greedy minimization algorithm.

\begin{theorem}
\label{thm:arbitrary-min}
Suppose that $\obj$ is a monotone increasing function with curvature at most $c \in [0,1]$ and $\M$ is a matroid.  Let $S \in \bases{\M}$ be the base produced by the standard greedy minimization algorithm on $\obj$ and $\M$, and let $O \in \bases{\M}$ be any base of $\M$.  Then,
\[
\obj(S) \le \frac{1}{1-c}\obj(O).
\]
\end{theorem}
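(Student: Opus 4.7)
The plan is to mirror the maximization proof of Theorem \ref{thm:arbitrary-max}, swapping the roles of the greedy and optimum marginals and invoking the curvature bound in the opposite direction. Let $r$ be the rank of $\M$, let $s_1, \ldots, s_r$ be the elements picked by the greedy minimization algorithm in order, and set $S_i = \{s_1, \ldots, s_i\}$. Applying Lemma \ref{lem:brualdi} to the bases $S$ and $O$, I would order the elements of $O$ as $o_1, \ldots, o_r$ so that $S - s_i + o_i \in \bases{\M}$ for every $i$; since $s_i \notin S_{i-1}$, the set $S_{i-1} + o_i$ is a subset of $S - s_i + o_i$ and therefore independent, so $o_i$ was a feasible candidate at round $i$.

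Next I would use the greedy choice rule: since greedy minimization picked the element of smallest marginal among feasible extensions of $S_{i-1}$, we have $f_{S_{i-1}}(s_i) \le f_{S_{i-1}}(o_i)$ for every $i$. Now apply the generalized curvature bound \eqref{eq:gen-curvature-2}, but reading it the other way: for any $j$ and any $A, B \subseteq \ground \setminus \{j\}$, $f_A(j) \le \frac{1}{1-c} f_B(j)$. Taking $A = S_{i-1}$, $B = O_{i-1} = \{o_1, \ldots, o_{i-1}\}$, and $j = o_i$, this yields $f_{S_{i-1}}(o_i) \le \frac{1}{1-c} f_{O_{i-1}}(o_i)$.

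Summing these inequalities over $i = 1, \ldots, r$ and telescoping gives
\begin{align*}
f(S) &= f(\emptyset) + \sum_{i=1}^{r} f_{S_{i-1}}(s_i) \le f(\emptyset) + \frac{1}{1-c} \sum_{i=1}^{r} f_{O_{i-1}}(o_i) \\
&= f(\emptyset) + \frac{1}{1-c}\bigl(f(O) - f(\emptyset)\bigr).
\end{align*}
Since $f \ge 0$ and $\frac{1}{1-c} \ge 1$, the residual $f(\emptyset)$ term is absorbed, giving $f(S) \le \frac{1}{1-c} f(O)$.

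There is no real obstacle here beyond two small bookkeeping points. The first is verifying that $o_i$ was feasible at step $i$ of the greedy algorithm; the Brualdi bijection gives us $S - s_i + o_i \in \bases{\M}$, and I need the slightly stronger-sounding $S_{i-1} + o_i \in \I$, which follows immediately by downward closure once one notices the subset relation. The second is making sure the curvature inequality is being applied in the direction appropriate to minimization; since \eqref{eq:gen-curvature-2} holds for all pairs $A, B$, we are free to flip which marginal is the larger one, and the factor of $\frac{1}{1-c}$ propagates cleanly through the telescoping sum.
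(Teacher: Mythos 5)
Your proof is correct and uses the same ingredients as the paper's — the Brualdi bijection, the greedy-choice inequality $f_{S_{i-1}}(s_i) \le f_{S_{i-1}}(o_i)$, the curvature bound \eqref{eq:gen-curvature-2}, and nonnegativity of $f(\emptyset)$ to absorb the residual term. The only difference is cosmetic: you start from $f(S)$ and bound it above by $\tfrac{1}{1-c}f(O)$, whereas the paper starts from $f(O)$ and bounds it below by $(1-c)f(S)$, so the two are the same argument read in opposite directions.
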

\begin{proof}
Let $r, S_i, s_i, O_i,$ and $o_i$ be defined as in the proof of Theorem \ref{thm:arbitrary-max}.  Then, 
\begin{align*}
f(O) &= f(\emptyset) + \sum_{i=1}^rf_{O_{i-1}}(o_i) \\
&\ge  (1-c)f(\emptyset) + (1-c)\sum_{i = 1}^rf_{S_{i-1}}(o_i) \\
&\ge  (1-c)f(\emptyset) + \sum_{i = 1}^rf_{S_{i-1}}(s_i) \\
&= (1-c)f(S).
\end{align*}
As in the proof of Theorem \ref{thm:arbitrary-max}, the first inequality follows from \eqref{eq:gen-curvature-2} and $f(\emptyset) 
\ge 0$.  The last inequality is due to the fact that $S_{i-1} + o_i \in \I$ but $s_i$ was chosen by the greedy minimization algorithm in the $i$th round.
\end{proof}

Now, we consider the case in which $f$ is a monotone decreasing function.  For any function $f : 2^\ground \to \posreals$, we define the function $\fdual : 2^\ground \to \posreals$ by $\fdual(S) = f(\ground \setminus S)$ for all $S \subseteq \ground$.  Then, since $f$ is monotone decreasing, $\fdual$ is monotone increasing.  Moreover, the next lemma shows that the curvature of $\fdual$ is the same as that of $f$.

\begin{lemma}
\label{lem:increasing-decreasing-curvature}
Let $f$ be a monotone decreasing function with curvature at most $c \in [0,1]$, and define $\fdual(S) = f(\ground \setminus S)$ for all $S \subseteq \ground$.  Then, $\fdual$ also has curvature at most $c$.
\end{lemma}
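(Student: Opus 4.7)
The plan is to compute marginals of $\fdual$ directly in terms of marginals of $f$ on complementary sets, and then observe that the curvature condition for $f$, stated in the form \eqref{eq:gen-curvature-3}, transfers term-by-term.

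First, I would fix $j \in \ground$ and $A \subseteq \ground \setminus \{j\}$. Since $j \notin A$, we have $j \in \ground \setminus A$, so $\ground \setminus (A + j) = (\ground \setminus A) - j$. Unfolding the definition of $\fdual$ gives
\begin{equation*}
\fdual_A(j) = f\bigl((\ground \setminus A) - j\bigr) - f(\ground \setminus A) = -\,f_{(\ground \setminus A) - j}(j).
\end{equation*}
Because $f$ is monotone decreasing, $f_{(\ground \setminus A) - j}(j) \le 0$, so $\fdual_A(j) \ge 0$, confirming (as already observed in the excerpt) that $\fdual$ is monotone increasing.

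Next, I would plug this identity into the generalized curvature definition \eqref{eq:gen-curvature} applied to $\fdual$. For any $j \in \ground$ and any $S,T \subseteq \ground \setminus \{j\}$,
\begin{equation*}
\frac{\fdual_S(j)}{\fdual_T(j)} \;=\; \frac{-\,f_{(\ground \setminus S) - j}(j)}{-\,f_{(\ground \setminus T) - j}(j)} \;=\; \frac{f_{S'}(j)}{f_{T'}(j)},
\end{equation*}
where $S' := (\ground \setminus S) - j$ and $T' := (\ground \setminus T) - j$. The crucial observation is that the map $A \mapsto (\ground \setminus A) - j$ sends $\ground \setminus \{j\}$ bijectively to itself, so $S'$ and $T'$ are valid arguments for the curvature definition of $f$.

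Finally, since $f$ is monotone decreasing with curvature at most $c$, the minimum in \eqref{eq:gen-curvature} (taken over exactly the same domain of triples $(j, S', T')$) is at least $1-c$, which gives $\fdual_S(j)/\fdual_T(j) \ge 1 - c$. Taking the minimum over $j,S,T$ yields curvature at most $c$ for $\fdual$. The argument is essentially bookkeeping — no real obstacle — once one verifies that the involution $A \mapsto (\ground \setminus A) - j$ preserves the family of admissible subsets and flips the sign of both numerator and denominator simultaneously, leaving the ratio untouched.
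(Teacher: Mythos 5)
Your proof is correct and is essentially the paper's proof: both hinge on the identity $\fdual_A(j) = -f_{\ground\setminus(A+j)}(j)$ and then transfer the curvature bound for $f$ to $\fdual$, with the paper invoking \eqref{eq:gen-curvature-3} directly where you phrase the same step as a ratio preserved under the involution $A \mapsto \ground\setminus(A+j)$.
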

\begin{proof}
From the definition of $\fdual$, we have:
\[
\fdual_A(j) = f(\ground \setminus (A + j)) - f(\ground \setminus A) = -f_{\ground\setminus (A + j)}(j),
\]
for any $A \subseteq \ground$ and $j \in \ground$.  Consider any $j \in \ground$ and $S,T \subseteq \ground \setminus \{j\}$.   Since $\obj$ is monotone decreasing with curvature at most $c$, \eqref{eq:gen-curvature-3} implies
\begin{equation*} 
\fdual_{S}(j)
= -f_{\ground \setminus (S + j)}(j)
\ge
-(1-c)f_{\ground \setminus (T + j)}(j)
= (1-c)\fdual_T(j).
\end{equation*}
Thus, $\frac{\fdual_S(j)}{\fdual_T(j)} \ge (1-c)$ for all $j \in \ground$ and $S,T \subseteq \ground \setminus \{j\}$.
\end{proof}

Given a matroid $\M$, we consider the problem of finding a base of $\M$ minimizing $f$.  This problem is equivalent to finding a base of the dual matroid $\Mdual$ that minimizes $\fdual$.  Similarly, the problem of finding a base of $\M$ that maximizes $f$ can be reduced to that of finding a base of $\Mdual$ that maximizes $\fdual$.  Since $\fdual$ is monotone increasing with curvature no more than $f$, we obtain the following corollaries, which show how to employ the standard greedy algorithm to optimize monotone decreasing functions.
\begin{corollary}
\label{thm:arbitrary-max-decreasing}
Suppose that $f$ is a monotone decreasing function with curvature at most $c \in [0,1]$ and $\M$ is a matroid.  Let $\Sdual \in \bases{\Mdual}$ be the base of $\Mdual$ produced by running the standard greedy maximization algorithm on $\fdual$ and $\Mdual$.  Let $O \in \bases{\M}$ be any base of $\M$, $\Odual = \ground \setminus O$, and $S = \ground \setminus \Sdual \in \bases{\M}$.  Then,
\[
f(S) = \fdual(\Sdual) \ge (1-c)\fdual(\Odual) = (1-c)f(O).
\]
\end{corollary}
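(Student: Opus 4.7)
The plan is to reduce this corollary to Theorem \ref{thm:arbitrary-max} applied to the dual function $\fdual$ on the dual matroid $\Mdual$. The identities $f(S) = \fdual(\Sdual)$ and $f(O) = \fdual(\Odual)$ are immediate from the definition $\fdual(A) = f(\ground \setminus A)$ together with the facts $S = \ground \setminus \Sdual$ and $\Odual = \ground \setminus O$, so the two outer equalities in the stated inequality chain are free.

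Next, I would observe that $\fdual$ is monotone increasing (since $f$ is monotone decreasing and complementation reverses set inclusion), and that by Lemma \ref{lem:increasing-decreasing-curvature} the curvature of $\fdual$ is at most $c$. A standard matroid-theoretic fact recalled in Section \ref{sec:matroids} is that $\Mdual$ is indeed a matroid whose bases are exactly the complements of bases of $\M$; in particular $\Odual \in \bases{\Mdual}$. Thus the hypotheses of Theorem \ref{thm:arbitrary-max} are satisfied for the function $\fdual$, the matroid $\Mdual$, and the comparison base $\Odual$.

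The set $\Sdual$ is precisely the output of running the standard greedy maximization algorithm on $\fdual$ over $\Mdual$, so Theorem \ref{thm:arbitrary-max} directly yields
\[
\fdual(\Sdual) \ge (1-c)\,\fdual(\Odual).
\]
Combining this with the two identities from the first step gives the desired chain $f(S) = \fdual(\Sdual) \ge (1-c)\fdual(\Odual) = (1-c)f(O)$, completing the argument.

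There is really no main obstacle here: the corollary is a clean translation of Theorem \ref{thm:arbitrary-max} across the bijection between $\bases{\M}$ and $\bases{\Mdual}$, with Lemma \ref{lem:increasing-decreasing-curvature} handling the preservation of the curvature bound. The only thing worth being careful about is confirming that the generalized curvature $c$ used in \eqref{eq:gen-curvature} really does transfer unchanged under complementation, but this is exactly the content of Lemma \ref{lem:increasing-decreasing-curvature}, so no additional work is required.
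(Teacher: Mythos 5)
Your proposal is correct and matches the paper's (implicit) argument exactly: the paper derives this corollary by applying Theorem~\ref{thm:arbitrary-max} to $\fdual$ over $\Mdual$, using Lemma~\ref{lem:increasing-decreasing-curvature} to carry the curvature bound across complementation and the standard duality fact that bases of $\Mdual$ are complements of bases of $\M$. Nothing further is needed.
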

\begin{corollary}
\label{thm:arbitrary-min-decreasing}
Suppose that $f$ is a monotone decreasing function with curvature at most $c \in [0,1]$ and $\M$ is a matroid.  Let $\Sdual \in \bases{\Mdual}$ be the base of $\Mdual$ produced by running the standard greedy minimization algorithm on $\fdual$ and $\Mdual$.  Let $O \in \bases{\M}$ be any base of $\M$, $\Odual = \ground \setminus O$, and $S = \ground \setminus \Sdual \in \bases{\M}$.  Then,
\[
f(S) = \fdual(\Sdual) \le \frac{1}{1-c}\fdual(\Odual) = \frac{1}{1-c}f(O).
\]
\end{corollary}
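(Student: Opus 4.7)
The plan is to obtain this statement as a direct reduction to the monotone increasing case already handled by Theorem \ref{thm:arbitrary-min}. Specifically, I would exploit the correspondence $f \leftrightarrow \fdual$ defined by $\fdual(S) = f(\ground \setminus S)$, under which the problem of minimizing a monotone decreasing function $f$ over $\bases{\M}$ becomes equivalent to minimizing the monotone increasing function $\fdual$ over $\bases{\Mdual}$, since the bases of $\Mdual$ are exactly the complements of the bases of $\M$.

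First I would verify that $\fdual$ meets the hypotheses of Theorem \ref{thm:arbitrary-min}. Monotonicity of $\fdual$ is immediate from the definition together with the fact that $f$ is monotone decreasing, and Lemma \ref{lem:increasing-decreasing-curvature} already shows that $\fdual$ has curvature at most $c$. The dual $\Mdual$ is a matroid by the standard theory of matroid duality recalled in Section \ref{sec:matroids}, and $\bases{\Mdual} = \{ \ground \setminus B : B \in \bases{\M}\}$; in particular, $\Odual = \ground \setminus O$ is a base of $\Mdual$ and is therefore a legitimate comparison base in the statement of Theorem \ref{thm:arbitrary-min}.

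With the reduction set up, I would apply Theorem \ref{thm:arbitrary-min} to $\fdual$, $\Mdual$, and the reference base $\Odual$, obtaining
\[
\fdual(\Sdual) \le \tfrac{1}{1-c}\, \fdual(\Odual).
\]
Translating back via $\fdual(\Sdual) = f(\ground\setminus\Sdual) = f(S)$ and $\fdual(\Odual) = f(\ground\setminus \Odual) = f(O)$ then yields the desired bound. I do not anticipate any serious obstacle: the heavy lifting has already been done by Lemma \ref{lem:increasing-decreasing-curvature} and Theorem \ref{thm:arbitrary-min}. The only care needed is in matching up the two algorithms across the reduction, namely observing that the ``standard greedy minimization algorithm'' applied to $\fdual$ and $\Mdual$ in the statement is exactly the object whose performance Theorem \ref{thm:arbitrary-min} bounds, so the guarantee transfers verbatim.
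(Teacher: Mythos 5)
Your proposal is correct and matches the paper's own derivation exactly: the paper states the corollary as an immediate consequence of the dual-matroid reduction, the fact that $\fdual$ is monotone increasing with the same curvature (Lemma \ref{lem:increasing-decreasing-curvature}), and Theorem \ref{thm:arbitrary-min}. Nothing to add.
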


The approximation factors of $1-c$ and $1/(1-c)$ for maximization and minimization respectively are best possible, given curvature $c$.
The hardness result for minimization follows from \cite{IJB}, where it is shown that no algorithm using polynomially many value queries can achieve an approximation factor of $\rho(n,\epsilon) = \frac{n^{1/2 - \epsilon}}{1 + (n^{1/2 - \epsilon} - 1)(1 - c)}$ for the problem $\min \{ f(S): |S| \ge r \}$, where $f$ is monotone increasing (even submodular) of curvature $c$. This implies that no $1/(1-c+\epsilon)$-approximation for a constant $\epsilon>0$ is possible for this problem. Next, we prove the hardness result for maximization; this proof is based on known hardness constructions for maximization of XOS functions \cite{DobzinskiS06,MirrokniSV08}.

\begin{theorem}
\label{thm:curvature-max-hardness}
For any constant $c \in (0,1)$ and $\epsilon>0$, there is no $(1-c+\epsilon)$-approximation using polynomially many queries for the problem $\max \{f(S): |S| \leq k \}$ where $f$ is monotone increasing of curvature $c$.
\end{theorem}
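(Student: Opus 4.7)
The plan is to reduce the claim to a value-oracle hardness result for maximizing an XOS-like function under a cardinality constraint, in the style of \cite{DobzinskiS06,MirrokniSV08}. The key observation is that for any monotone, nonnegative $g : 2^{\ground} \to \posreals$ whose marginals satisfy $g_A(j) \in [0,1]$, the augmented function
\[
\hat f(S) = (1-c)\,|S| + c \cdot g(S)
\]
is monotone increasing, nonnegative, and has marginals $\hat f_A(j) = (1-c) + c\,g_A(j) \in [1-c,1]$, so by \eqref{eq:gen-curvature} its generalized curvature is at most $c$. Thus any value-oracle hardness for $g$ transfers to a matching $(1-c)$-factor hardness for $\hat f$, at the cost of a linear additive term.

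I would then take $g$ to be a random XOS-type instance, say $g(S) = \max_i |S \cap B_i|$ for a uniformly random partition $B_1, \ldots, B_{n/k}$ of $\ground$ into blocks of size $k$, with $k$ chosen so that $\omega(\log n) \le k \le o(\sqrt n)$ (e.g.\ $k = n^{1/3}$). A direct check shows that $g$ is monotone with marginals in $\{0,1\}$, and the optimum of $\hat f$ under $|S|\le k$ equals $(1-c)k + ck = k$, attained at any block $B_i$. The value-oracle hardness argument then gives that for any randomized algorithm making polynomially many queries, the returned set $S$ satisfies $g(S) = o(k)$ with high probability over the random partition, whence $\hat f(S) \le (1-c)k + c\cdot o(k) = (1-c+o(1))\,k$. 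By Yao's principle, for every constant $\epsilon>0$ there exists a fixed instance on which no polynomial-query algorithm achieves a $(1-c+\epsilon)$-approximation, which is the desired statement.

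The main obstacle is the indistinguishability step: showing that a random partition into size-$k$ blocks cannot be localized from polynomially many value queries well enough to find a set with $g(S) = \Omega(k)$. This is precisely the content of the XOS hardness constructions cited, and the adaptation to the cardinality-constrained setting proceeds along familiar lines. Concretely, a hypergeometric tail bound combined with a union bound over the $n/k$ blocks and over the polynomially many adaptive queries shows that, with probability $1 - n^{-\omega(1)}$, every queried set $S$ of size at most $k$ has $\max_i |S \cap B_i| = O(\log n)$; this can be turned into a coupling between the algorithm's execution on the random instance and on the symmetric baseline function $(1-c+ck/n)|S|$, under which the algorithm outputs the same $S$ and achieves value at most $(1-c+o(1))\,k$. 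Extending this coupling to also control queries of size larger than $k$ (which can return values up to $k$ but still reveal only $O(\log n)$ bits each, against $\Theta(n \log n)$ bits needed to describe the partition) is the most delicate part of the argument and is where the bulk of the work from \cite{DobzinskiS06,MirrokniSV08} is invoked.
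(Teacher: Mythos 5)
There is a genuine gap. The curvature step is sound: taking $\hat f(S) = (1-c)|S| + c\,g(S)$ with $g$ monotone and $g_A(j)\in[0,1]$ does force $\hat f_A(j)\in[1-c,1]$ and hence generalized curvature at most $c$. But the specific base function $g(S)=\max_i|S\cap B_i|$ for a random partition into blocks of size $k$ is \emph{not} hard to maximize in the value oracle model, so the whole construction collapses. Concretely, start from $Q=\{j_0\}$; then $j_0$'s block $B_{\ell(j_0)}$ is the unique argmax of $Q$, and for any $j\notin Q$ the marginal $g(Q\cup j)-g(Q)$ equals $1$ if $j\in B_{\ell(j_0)}$ and $0$ otherwise, since every other block has intersection $0<|Q|$ with $Q$. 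The argmax stays unique as long as $Q\subseteq B_{\ell(j_0)}$, so a single greedy pass over the ground set that adds every $j$ with positive marginal reconstructs $B_{\ell(j_0)}$ exactly in $n$ queries and returns a set of value $\hat f(B_{\ell(j_0)})=k$, i.e.\ the true optimum. Thus no hardness follows from this instance. The auxiliary bit-counting remark also does not save the argument: an algorithm making $\mathrm{poly}(n)$ queries of $O(\log n)$ bits each can extract $\mathrm{poly}(n)\log n$ bits, which vastly exceeds the $\Theta(n\log n)$ bits needed to encode the partition, so such a count cannot rule out learning.

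The paper avoids exactly this failure by \emph{not} exposing the sensitive term additively. It takes a single random set $O$ of size $n^{2/3}$ and sets $f(S)=\max\{|S\cap O|,(1-c)|S|+cn^{1/3},\min(|S|,n^{1/3})\}$. The linear envelope $(1-c)|S|+cn^{1/3}$ inside the max censors $|S\cap O|$: for every query $Q$, $f(Q)$ is a fixed function of $|Q|$ alone unless $|Q\cap O|$ exceeds the envelope, an event which (for $O$ of density $n^{-1/3}$) is exponentially unlikely by Chernoff even for adaptively chosen $Q$, so the algorithm learns nothing about $O$. Some such censoring is essential in your framework as well: the purely additive form $(1-c)|S|+c\,g(S)$ exposes every marginal of $g$ unchanged, so $g$ must itself be information-theoretically unlearnable from marginal queries, and the max-over-a-random-partition function is the opposite of that. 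If you want to keep the clean additive decomposition, you would need to replace $g$ with a censored function such as $g(S)=\max\{|S\cap O|,\psi(|S|)\}$ for a step function $\psi$ that tracks the high-probability envelope of $|S\cap O|$ as a function of $|S|$; that choice retains $\{0,1\}$ marginals and is the analogue of the paper's construction in your framework.
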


\begin{proof}
Fix $c \in (0,1)$, $|X| = n$ and let $O \subseteq X$ be a random subset of size $k = n^{2/3}$ (assume $k$ is an integer). We define the following function:
\begin{itemize}
\item $ f(S) = \max \{ |S \cap O|, (1-c) |S| + c n^{1/3}, \min \{|S|, n^{1/3}\} \}$.
\end{itemize}
The marginal values of $f$ are always between $1-c$ and $1$; therefore, its curvature is $c$.
Consider a query $Q$.
For $|Q| \le n^{1/3}$, we have $f(Q) = |Q|$ in any case. For $|Q| > n^{1/3}$, we have $f(Q) = (1-c) |Q| + c n^{1/3}$, unless
$|Q \cap O| > (1-c) |Q| + c n^{1/3}$. Since $O$ is a random $\frac{1}{n^{1/3}}$-fraction of the ground set, we have $\expect{|Q \cap O|} = \frac{1}{n^{1/3}} |Q| \geq 1$.
Therefore, by the Chernoff bound, $\Pr[|Q \cap O| > (1-c) |Q| + c n^{1/3}]$ is exponentially small in $n^{1/3}$ (with respect to the choice of $O$).
Furthermore, as long as $|Q \cap O| \leq (1-c) |Q| + c n^{1/3}$, $f(Q)$ depends only on $|Q|$ and hence the algorithm does not learn anything about the identity of the optimal set $O$. Unless the algorithm queries an exponential number of sets, it will never find a set $S$ satisfying $|S \cap O| > (1-c) |S| + c n^{1/3}$ and hence the value of the returned solution is $f(S) \leq (1-c) |S| + c n^{1/3} \leq (1-c) n^{2/3} + c n^{1/3}$ with high probability. On the other hand, the optimum is $f(O) = |O| = n^{2/3}$. Therefore, the algorithm cannot achieve a better than $(1-c+ o(1))$-approximation.
\end{proof}

Therefore, the approximation factors in Theorems~\ref{thm:arbitrary-max} and \ref{thm:arbitrary-min} are optimal.
Combining these inapproximability results with Lemma \ref{lem:increasing-decreasing-curvature} we derive similar inapproximability results showing the optimality of Corollary~\ref{thm:arbitrary-max-decreasing} and \ref{thm:arbitrary-min-decreasing}.

\section{Application: the Column-Subset Selection Problem}\label{application}

Let $A$ be an $m \times n$ real matrix. We denote the columns of $A$ by $\bc_1,\ldots,\bc_n$. I.e., for $\bx \in \reals^n$,
$A \bx = \sum x_i \bc_i$. The (squared) Frobenius norm of $A$ is defined as $$\|A\|^{2}_F = \sum_{i,j} a_{ij}^2 = \sum_{i=1}^{n} \|\bc_i\|^2,$$ where here, and throughout this section, we use $\|\!\cdot\!\|$ to denote the standard, $\ell_{2}$ vector norm.

For a matrix $A$ with independent columns, the {\em condition number} is defined as
$$ \kappa(A) = \frac{\sup_{\|\bx\|=1} \|A \bx\|}{\inf_{\|\bx\|=1} \|A \bx\|}.$$  If the columns of $A$ are dependent, then $\kappa(A) = \infty$ (there is a nonzero vector $\bx$ such that $A\bx = 0$).

Given a matrix $A$ with columns $\bc_1,\ldots,\bc_n$, and a subset $S \subseteq [n]$, we denote by $$\proj_S(\bx) = \mbox{argmin}_{\by \in \spn(\{\bc_i: i \in S\})} \|\bx - \by\|$$ the projection of $\bx$ onto the subspace spanned by the respective columns of $A$.
Given $S \subseteq [n]$, it is easy to see that the matrix $A(S)$ with columns spanned by $\{ \bc_i: i \in S\}$ that is closest to $A$ in squared Frobenius norm is $A(S) = (\proj_S(\bc_1), \proj_S(\bc_2), \ldots, \proj_S(\bc_n))$. The distance between  two matrices is thus
$$ \|A - A(S)\|^{2}_F= \sum_{i=1}^{n} \| \bc_i - \proj_S(\bc_i) \|^2.$$
We define $f^A:2^{[n]} \rightarrow \reals$ to be this quantity as a function of $S$:
\begin{align*}
f^A(S) &= \sum_{i=1}^{n} \| \bc_i - \proj_S(\bc_i) \|^2 \\
&= \sum_{i=1}^{n} ( \|\bc_i\|^2 - \| \proj_S(\bc_i) \|^2),
\end{align*}
where the final equality follows from the fact that $\proj_{S}(\bc_i)$ and $\bc_{i} - \proj_{S}(\bc_{i})$ are orthogonal.

Given a matrix $A \in \reals^{m \times n}$ and an integer $k$, the \emph{column-subset selection problem} (CSSP) is to select a subset $S$ of $k$ columns of $A$ so as to minimize $f^A(S)$.
It follows from standard properties of projection that $f^A$ is non-increasing, and so CSSP is a special case of non-increasing minimization subject to a cardinality constraint.  We now show that the curvature of $f^{A}$ is related to the condition number of $A$.

\begin{lemma}
\label{lem:condition-curvature}
For any non-singular matrix $A$, the curvature $c=c({f^A})$
of the associated set function $f_A$ satisfies 
$$ \frac{1}{1-c} \leq \kappa^2(A).$$
\end{lemma}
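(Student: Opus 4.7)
Since $f^A$ is monotone decreasing and supermodular, the definition of total curvature in Equation~\eqref{eq:curvature} gives $\frac{1}{1-c} = \max_{j \in [n]} \frac{-f^A_\emptyset(j)}{-f^A_{[n]-j}(j)}$, so it is enough to bound this ratio at each coordinate $j$ by $\kappa^2(A)$. The plan is to compute both the numerator and the denominator in closed form in terms of the columns $\bc_1,\ldots,\bc_n$ of $A$, and then to bound each one using the extremal singular values of $A$.

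For the numerator, the rank-one projection $\proj_{\{j\}}(\bc_i) = \frac{\langle \bc_i,\bc_j\rangle}{\|\bc_j\|^2}\bc_j$ has squared length $\langle \bc_i,\bc_j\rangle^2/\|\bc_j\|^2$, so summing over $i$ gives
\[
-f^A_\emptyset(j) \;=\; \sum_{i=1}^n \|\proj_{\{j\}}(\bc_i)\|^2 \;=\; \frac{\|A^T\bc_j\|^2}{\|\bc_j\|^2} \;\le\; \sigma_1^2(A),
\]
using $\|A^T\bc_j\| \le \|A^T\|_{\mathrm{op}}\|\bc_j\| = \sigma_1(A)\|\bc_j\|$. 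For the denominator, the columns of $A$ span the full column space, so $\proj_{[n]}(\bc_i) = \bc_i$ for every $i$ and $\proj_{[n]-j}(\bc_i) = \bc_i$ for every $i \neq j$; the corresponding terms of $f^A$ cancel, and invoking orthogonality of the projection residual yields
\[
-f^A_{[n]-j}(j) \;=\; \|\bc_j\|^2 - \|\proj_{[n]-j}(\bc_j)\|^2 \;=\; \dist\bigl(\bc_j,\,\spn\{\bc_i : i \neq j\}\bigr)^2.
\]
To evaluate this distance, let $b_j$ denote the $j$-th row of the Moore--Penrose pseudoinverse $A^+$. The identity $A^+A = I_n$ reads $b_j \bc_i = \delta_{ji}$, so $b_j^T$ is perpendicular to $\bc_i$ for every $i \neq j$. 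Using the SVD $A = U\Sigma V^T$ to observe that $A^{+T} = U\Sigma^{-1}V^T$ has the same column span $\mathrm{col}(U)$ as $A$, we conclude that $b_j^T$ lies in $\spn\{\bc_1,\ldots,\bc_n\}$ and hence spans the one-dimensional orthogonal complement of the hyperplane $\spn\{\bc_i : i \neq j\}$ inside this column span; the classical projection argument then gives $\dist(\bc_j,\spn\{\bc_i: i \neq j\}) = |b_j\bc_j|/\|b_j\| = 1/\|A^{+T}\be_j\|$. Finally, $\|A^{+T}\be_j\| \le \|A^+\|_{\mathrm{op}} = 1/\sigma_n(A)$ gives $-f^A_{[n]-j}(j) \ge \sigma_n^2(A)$.

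Combining the two inequalities, every index-$j$ ratio is at most $\sigma_1^2(A)/\sigma_n^2(A) = \kappa^2(A)$, and taking the maximum over $j$ concludes the proof. The main technical point I expect is the denominator identification: justifying the classical formula $\dist(\bc_j, \spn\{\bc_i: i \neq j\}) = 1/\|A^{+T}\be_j\|$ in the possibly rectangular non-singular setting. This reduces via the SVD to observing that $A^+$ and $A$ share a column span, after which the standard projection argument from the invertible square case carries over verbatim.
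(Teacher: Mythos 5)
Your proof is correct but takes a genuinely different route from the paper's, and it carries one unstated hypothesis. The paper proves the \emph{uniform} bound $\min_{\|\bx\|=1}\|A\bx\|^2 \leq |f^A_S(i)| \leq \max_{\|\bx\|=1}\|A\bx\|^2$ for \emph{every} $S$ and $i\notin S$, starting from the identity $|f^A_S(i)| = \sum_{j}\|\proj_{\bv_{i,S}}(\bc_j)\|^2$ with $\bv_{i,S} = \bc_i - \proj_S(\bc_i)$ and then exhibiting unit vectors $\bx,\bx'$ by hand via a Cauchy--Schwarz argument; this controls the general curvature \eqref{eq:gen-curvature} with no structural assumption on $f^A$ beyond monotonicity. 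You instead reduce immediately to the extremal pair $S=[n]-j$, $T=\emptyset$, which is a valid reduction only if $f^A$ is supermodular --- a true fact (since $\|\proj_S(\bv)\|^2$ is submodular in $S$) but one you assert without proof and which the paper never invokes in its uncommented text. Granted that premise, your computations check out: $-f^A_\emptyset(j) = \|A^T\bc_j\|^2/\|\bc_j\|^2 \leq \sigma_1^2(A)$, and $-f^A_{[n]-j}(j) = \dist(\bc_j,\spn\{\bc_i : i\neq j\})^2 = 1/\|A^{+T}\be_j\|^2 \geq \sigma_n^2(A)$, where the pseudoinverse step correctly uses that $A^{+T}=A(A^TA)^{-1}$ shares its column span with $A$ and that the $j$-th row of $A^+$ is orthogonal to every $\bc_i$ with $i\neq j$. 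What your route buys is algebraic economy and a direct link between the extremal marginals and the extremal singular values; what the paper's route buys is a strictly stronger conclusion (all marginals, not just the extremal ones, are trapped between $\sigma_n^2$ and $\sigma_1^2$) and freedom from the supermodularity dependency. If you keep your approach, add a line proving or citing that $f^A$ is supermodular.
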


\begin{proof}
We want to prove that for any $S$ and $i \notin S$, 
\begin{eqnarray}\label{conditionNumberInequality}
 \min_{\|\bx\|=1} \|A\bx\|^2 \leq  |f^A_S(i)| \leq \max_{\|\bx\|=1} \|A\bx\|^2.
\end{eqnarray}
This implies that by varying the set $S$, a marginal value can change by at most a factor of $\kappa^2(A)$. 
Recall that the marginal values of $f^A$ are negative, but only the ratio matters so we can consider the respective absolute values. 
The inequalities (\ref{conditionNumberInequality}) imply that 
$$ \frac{1}{1-c}=\max_{j \in \ground} \max_{S,T \subseteq \ground \setminus \{j\}}\frac{|f_S(j)|}{|f_T(j)|}   \leq \kappa^2(A).$$

We now prove the  inequalities (\ref{conditionNumberInequality}). Let $\bv_{i,S} = \bc_i - \proj_S(\bc_i)$ denote the component of $\bc_i$ orthogonal to $\spn(S)$.
We have
\begin{align*}
|f^A_S(i)| &=  f^A(S) - f^A(S+i) \\
&= \sum_{j=1}^{n} \left( \|\bc_j - \proj_S(\bc_j)\|^2 - \|\bc_j - \proj_{S+i}(\bc_j)\|^2 \right) \\
&= \sum_{j=1}^{n} \left( \|\bc_j - \proj_S(\bc_j)\|^2 - \|\bc_j - \proj_{S}(\bc_j) - \proj_{\bv_{i,S}}(\bc_j) \|^2 \right) \\
&= \sum_{j=1}^{n} \| \proj_{\bv_{i,S}}(\bc_j) \|^2
\end{align*}
because $\proj_S(\bc_j)$, $\proj_{\bv_{i,S}}(\bc_j)$ and $\bc_j - \proj_{S}(\bc_j) - \proj_{\bv_{i,S}}(\bc_j)$ are orthogonal.

Our first goal is to show that if $|f^A_S(i)|$  is large, then there is a unit vector $\bx$ such that $\|A\bx\|$ is large. In particular, let us define $p_j = (\bv_{i,S} \cdot \bc_j)/\|\bv_{i,S} \|$ and $x_j = p_j / \sqrt{ \sum_{\ell=1}^{n} p_\ell^2}$. We have $\|\bx\|^2 = \sum_{j=1}^{n} x_j^2 = 1$. Multiplying by matrix $A$, we obtain $A\bx = \sum_{j=1}^{n} x_j \bc_j$. We can estimate $\|A \bx\|$ as follows:
\begin{align*}
 \bv_{i,S} \cdot (A \bx) &= \bv_{i,S} \cdot \sum_{j=1}^{n} x_j \bc_j  \\
&= \sum_{j=1}^{n} \frac{p_j}{\sqrt{\sum_{\ell=1}^{n} p_\ell^2}} (\bv_{i,S} \cdot \bc_j) 
\\ 
&= \sum_{j=1}^{n} \frac{p_j^2}{\sqrt{\sum_{\ell=1}^{n} p_\ell^2}} \|\bv_{i,S}\|  \\
&= \| \bv_{i,S} \| \sqrt{ \sum_{j=1}^{n} p_j^2}.
\end{align*}
By the Cauchy-Schwartz inequality, this implies that $\| A\bx \| \geq \sqrt{ \sum_{j=1}^{n} p_j^2} = \sqrt{ |f^A_S(i)|}$.

On the other hand, using the expression above, we have
\begin{align*}
 |f^A_S(i)| = \sum_{j=1}^{n} \| \proj_{\bv_{i,S}}(\bc_j) \|^2 \geq \| \proj_{\bv_{i,S}}(\bc_i) \|^2 = \| \bv_{i,S} \|^2
\end{align*}
since $\bv_{i,S} = \bc_i - \proj_S(\bc_i)$ is the component of $\bc_i$ orthogonal to $\spn(S)$.
We claim that if $\| \bv_{i,S} \|$ is small, then there is a unit vector $\bx'$ such that $\|A\bx'\|$ is small. To this purpose, write $\proj_{S}(\bc_i)$ as a linear combination of the vectors $\{\bc_j: j \in S \}$: $ \proj_{S}(\bc_i) = \sum_{j \in S} y_j \bc_j$. Finally, we define $y_i = -1$, and normalize to obtain $\bx' = \by / \| \by \|$.
We get the following:
\begin{align*}
\| A \bx' \| = \frac{1}{\| \by \|} \| A \by \| &= \frac{1}{\| \by \|}  \| \sum_{j=1}^{n} y_j \bc_j \| \\
&= \frac{1}{\| \by \|} \| \proj_{S}(\bc_i) - \bc_i \| \\
&= \frac{1}{\| \by \|} \| \bv_{i,S} \|.
\end{align*}
Since $\| \by \| \geq 1$, and $\| \bv_{i,S} \| \leq \sqrt{|f^A_{S}(i)|}$, we obtain $\|A \bx' \| \leq \sqrt{|f^A_{S}(i)|}$.

In summary, we have given two unit vectors $\bx, \bx'$ with $\|A \bx \| \geq \sqrt{|f^A_{S}(i)|}$ and $\|A \bx' \| \leq \sqrt{|f^A_{S}(i)|}$.
This proves that $ \min_{\|\bx\|=1} \|A\bx\|^2 \leq  |f^A_S(i)| \leq \max_{\|\bx\|=1} \|A\bx\|^2.$
\end{proof}

By Corollary \ref{thm:arbitrary-min-decreasing}, the standard greedy minimization algorithm is then a $\kappa^2(A)$-approximation for the column-subset selection problem.
The following lemma shows that Lemma~\ref{lem:condition-curvature} is asymptotically tight.

\begin{lemma}
There exists a matrix $A$ with condition number $\kappa$ for which the associated function $f^{A}$ has curvature $1 - O(1/\kappa^{2})$.
\end{lemma}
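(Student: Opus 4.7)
The plan is to construct an explicit family of matrices in which both the condition number and the ratio of marginal values of $f^A$ are controlled by the same small parameter. Since both quantities in Lemma~\ref{lem:condition-curvature} are driven by how close the columns of $A$ are to being collinear, the natural candidate is a $2\times 2$ matrix with two nearly parallel columns. Concretely, I would take $\bc_1 = (1, 0)$ and $\bc_2 = (\cos\theta, \sin\theta)$ for a small parameter $\theta \in (0, \pi/2)$, and show that as $\theta \to 0$ both sides of the inequality $1/(1-c) \le \kappa^2(A)$ scale like $1/\theta^2$.

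For the condition number, $A^\top A$ has eigenvalues $1 \pm \cos\theta$, so $\kappa^2(A) = (1+\cos\theta)/(1-\cos\theta) = \Theta(1/\theta^2)$. For the curvature, I would look at element $1$ with the two sets $T = \emptyset$ and $S = \{2\}$. Using $\proj_{\{2\}}(\bc_1) = \cos\theta\cdot \bc_2$, a direct computation of $f^A$ on all four subsets of $\{1,2\}$ gives
\[
f^A(\emptyset) = 2, \qquad f^A(\{1\}) = f^A(\{2\}) = \sin^2\theta, \qquad f^A(\{1,2\}) = 0.
\]
Consequently $|f^A_\emptyset(1)| = 2 - \sin^2\theta$ while $|f^A_{\{2\}}(1)| = \sin^2\theta$, and the ratio $|f^A_\emptyset(1)|/|f^A_{\{2\}}(1)| = (2-\sin^2\theta)/\sin^2\theta$ is $\Theta(1/\sin^2\theta) = \Theta(\kappa^2)$.

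By the generalized curvature definition in \eqref{eq:gen-curvature}, this ratio is a lower bound on $1/(1-c)$, so $1 - c \le \sin^2\theta/(2-\sin^2\theta) = O(1/\kappa^2)$, matching Lemma~\ref{lem:condition-curvature} up to a constant factor. To realize a prescribed $\kappa > 1$, one solves $\cos\theta = (\kappa^2-1)/(\kappa^2+1)$, which gives $\sin^2\theta = 4\kappa^2/(\kappa^2+1)^2$ and $1-c = O(1/\kappa^2)$ explicitly.

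There is essentially no conceptual obstacle here; the only work is verifying the four projection values, all of which reduce to the single identity $\|\bc_1 - \cos\theta\cdot\bc_2\|^2 = \sin^2\theta$. The main design choice is recognizing that a two-column, two-dimensional example already saturates the bound, so one does not need any large-$n$ construction, random perturbation, or spectral argument.
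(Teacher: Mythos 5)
Your proof is correct and takes a genuinely simpler route than the paper's. The paper works with the $n \times n$ matrix $A = (\be_1,\ \epsilon\be_1 + \be_2,\ \ldots,\ \epsilon\be_1 + \be_n)$ and then has to bound $\kappa(A) = O(\max\{1, \epsilon^2(n-1)\})$ through a case analysis on the coordinates of a unit vector, before estimating $f^A_\emptyset(1)$ and $f^A_{[n]\setminus\{1\}}(1)$ by exhibiting near-optimal linear combinations; the curvature only emerges after combining these asymptotic estimates. Your $2\times 2$ example with columns $(1,0)$ and $(\cos\theta,\sin\theta)$ gets the same conclusion with an exact, closed-form calculation: $A^\top A$ has eigenvalues $1 \pm \cos\theta$ so $\kappa^2 = (1+\cos\theta)/(1-\cos\theta)$, the four values of $f^A$ are $2,\ \sin^2\theta,\ \sin^2\theta,\ 0$, and the single ratio $|f^A_\emptyset(1)|/|f^A_{\{2\}}(1)| = (2-\sin^2\theta)/\sin^2\theta = \Theta(\kappa^2)$ immediately gives $1 - c \le O(1/\kappa^2)$. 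The tradeoff is that the paper's family lives in arbitrary dimension (which makes it reusable as a hardness instance for CSSP with general $k$, and connects to the construction of Boutsidis et al.), whereas your example only demonstrates that the curvature/condition-number inequality is tight as a relation between two scalars. For the lemma as stated, your construction is fully sufficient, and the ability to realize any prescribed $\kappa > 1$ by the continuous parameter $\theta$ cleanly settles the "there exists a matrix" claim.
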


Let us denote by $\dist_S$ the distance from $\bx$ to the subspace spanned by the columns corresponding to $S$.
$$ \dist_S(\bx) = \|\bx - \proj_S(\bx)\| = \mbox{min}_{\by \in \spn(\{\bc_i: i \in S\})} \|\bx - \by\|.$$


For some $\epsilon>0$, consider $A = (\bc_1, \ldots, \bc_n)$ where $\bc_1 = \be_1$ and $\bc_j = \epsilon \be_1 + \be_j$ for $j \geq 2$.
(A similar example was used in \cite{Boutsidis2014} for a lower bound on column-subset approximation.) Here, $\be_i$ is the $i$-th canonical basis vector in $R^n$. We claim that the condition number of $A$ is $\kappa = O(\max \{1, \epsilon^2 (n-1) \})$, while the curvature of $f^A$ is $1 - O(\frac{1}{\max \{1, \epsilon^4 (n-1)^2 \}}) = 1 - O(\frac{1}{\kappa^2})$.

To bound the condition number, consider a unit vector $\bx$. We have
$$ A \bx = (x_1 + \epsilon \sum_{i=2}^{n} x_i, x_2, x_3, \ldots, x_n) $$
and
\begin{align*} \|A \bx \|^2 &= (x_1 + \epsilon \sum_{i=2}^{n} x_i)^2 + \sum_{i=2}^{n} x_i^2 \\ &= \| \bx \|^2 + 2 \epsilon x_1 \sum_{i=2}^{n} x_i + \epsilon^2 (\sum_{i=2}^{n} x_i)^2.
\end{align*}
We need a lower bound and an upper bound on $\|A \bx \|$, assuming that $\| \bx\| = 1$. On the one hand, we have
\begin{align*}
\| A \bx \|^2 &\leq 1 + (x_1 + \epsilon \sum_{i=2}^{n} x_i)^2 \leq 1 + (1 + \epsilon \sqrt{n-1})^2 \\
&= O( \max \{1, \epsilon^2 (n-1) \}).
\end{align*}
On the other hand, to get a lower bound: if $x_1 \leq 1/2$, then $\| A \bx \|^2 \geq \sum_{i=2}^{n} x_i^2 = 1 - x_1^2 \geq \frac34$. If $x_1 > 1/2$, then either $\sum_{i=2}^{n} |x_i| \leq \frac{1}{4 \epsilon}$, in which case $$\| A \bx \|^2 \geq (x_1 + \epsilon \sum_{i=2}^{n} x_i)^2 \geq \frac{1}{16},$$ or $\sum_{i=2}^{n} |x_i| > \frac{1}{4 \epsilon}$ in which case by convexity we get $$\sum_{i=2}^{n} x_i^2 \geq \frac{1}{16 \epsilon^2 (n-1)}.$$ So, in all cases $\| A \bx \|^2 = \Omega(1 / \max \{1,\epsilon^2 (n-1)\})$. This means that the condition number of $A$ is $\kappa = O(\max \{1, \epsilon^2(n-1) \})$.

To lower-bound the curvature of $f^A$, consider the first column $\bc_1$ and let us estimate $f^A_\emptyset(1)$ and $f^A_{[n]\setminus \{1\}}(1)$. We have
$$ f^A_\emptyset(1) = \| \bc_1 \|^2 + \sum_{j=2}^{n} \| \proj_1(\bc_j) \|^2 = 1 + \epsilon^2 (n-1).$$
On the other side,
$$ f^A_{[n] \setminus \{1\}}(1) = \| \bc_1 - \proj_{[n] \setminus \{1\}} \bc_1 \|^2 = (\dist_{[n] \setminus \{1\}}(\bc_1))^2. $$
We exhibit a linear combination of the columns $\bc_2,\ldots,\bc_n$ which is close to $\bc_1$: Let $\by = \frac{1}{\epsilon(n-1)} \sum_{j=2}^{n} \bc_j$. 
We obtain
$$ \dist_{[n] \setminus \{1\}} (\bc_1) \leq \| \bc_1 - \by \| 
 = \frac{1}{\epsilon (n-1)} \| (0,-1,-1,\ldots,-1) \|
= \frac{1}{\epsilon \sqrt{n-1}}.$$
Alternatively, we can also pick $\by = 0$ which shows that $\dist_{[n] \setminus \{1\}}(\bc_1) \leq \| \bc_1 \| = 1$. So we have
$$ f^A_{[n] \setminus \{1\}}(1) = (\dist_{[n] \setminus \{1\}}(\bc_1))^2  \leq \min \left\{1, \frac{1}{\epsilon^2 (n-1)} \right\} 
 = \frac{1}{\max \{1, \epsilon^2 (n-1) \}}. $$
We conclude that the curvature of $f^A$ is at least $$1 - \frac{1}{\max \{1, \epsilon^4 (n-1)^2\}} = 1 - O\left(\frac{1}{\kappa^2}\right).$$

\paragraph{Acknowledgment}
We thank Christos Boutsidis for suggesting a connection between curvature and condition number.

\bibliographystyle{plain}
\bibliography{names,Curvature-full}

\appendix

\section{Proofs and Claims Omitted from the Main Body}
\label{sec:omitt-proofs-claims}

\submodcurv*
\begin{proof}
We order the elements of $\ground$ arbitrarily, and let $A_{j}$ be the set containing all those elements of $A$ that precede the element $j$.  Then, $\sum_{j \in A}f_{A_{j}}(j) = f(A) - f(\emptyset)$.  From \eqref{eq:curvature}, we have
\[
c \ge 1- \frac{f_{\ground-j}(j)}{f_{\emptyset}(j)}
\]
which, since $f_{\emptyset}(j) \ge 0$, is equivalent to
\[
f_{\ground-j}(j) \ge (1-c)f_{\emptyset}(j) \text{, for each }j \in A.
\]
Because $f$ is submodular, we have $f_{\emptyset}(j) \ge f_{A_{j}}(j)$ for all $j$,
and so
\begin{equation*}
\sum_{j \in A}f_{\ground - j}(j) \ge (1-c)\sum_{j \in A}f_{\emptyset}(j) \ge (1-c)\sum_{j\in A}f_{A_{j}}(j)
= (1-c)[f(A) - f(\emptyset)] \ge (1-c)f(A). 
\end{equation*}
\end{proof}

\supmodcurv*
\begin{proof}
Order $A$ arbitrarily, and let $A_{j}$ be the set of all elements in $A$ that precede element $j$, including $j$ itself.  Then, 
$\sum_{j \in A}f_{\ground \setminus A_{j}}(j) = f(\ground) - f(\ground \!\setminus\! A)$.
From \eqref{eq:curvature}, we have
\[c \ge 1 - \frac{f_{\ground - j}(j)}{f_{\emptyset}(j)},
\]
which, since $f_{\emptyset}(j) \le 0$, is equivalent to
\[
f_{\ground - j}(j) \le (1-c)f_{\emptyset}(j).
\]
Then, since $f$ is supermodular, we have $f_{\ground \setminus A_{j}}(j) \le f_{\ground - j}(j)$ for all $j \in A$,  and so
\begin{equation*}
(1 - c)\sum_{j \in A}f_{\emptyset}(j) \ge 
\sum_{j \in A}f_{\ground - j}(j) \ge
\sum_{j \in A}f_{\ground \setminus A_{j}}(j)
= f(\ground) - f(\ground \!\setminus\! A) \ge -f(\ground \!\setminus\! A).
\end{equation*}
\end{proof}

\begin{lemma}
\label{lem:submod-app}
Let $\obj : 2^{\ground} \to \posreals$ be a monotone-increasing submodular function and define $\lin(A) = \sum_{j \in A}\obj_{\ground - j}(j)$ and $\submod(A) = \obj(A) - \lin(A)$.  Then, $\submod$ is submodular,  monotone increasing, and nonnegative.
\end{lemma}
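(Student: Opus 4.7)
The plan is to verify each of the three properties of $\submod$ in turn, leveraging the fact that $\lin$ is a linear function (so that $\submod$ differs from $\obj$ by a linear term) and that the coefficients $\obj_{\ground - j}(j)$ of $\lin$ are the smallest possible marginals of $\obj$ at $j$.

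First I would observe that $\lin(A) = \sum_{j \in A} \obj_{\ground - j}(j)$ is a linear set function, since its value on $A$ is simply the sum of fixed per-element weights $w_j := \obj_{\ground - j}(j)$. Because $\obj$ is submodular and subtracting a linear function preserves submodularity, $\submod = \obj - \lin$ is immediately submodular. Concretely, for any $A, B \subseteq \ground$, $\lin(A) + \lin(B) = \lin(A \cap B) + \lin(A \cup B)$, so the submodular inequality for $\obj$ carries over unchanged.

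Next I would verify monotonicity by computing the marginal of $\submod$ at an arbitrary element. For any $A \subseteq \ground$ and $j \notin A$,
\begin{equation*}
\submod_A(j) = \obj_A(j) - \lin_A(j) = \obj_A(j) - \obj_{\ground - j}(j).
\end{equation*}
Since $A \subseteq \ground - j$, submodularity of $\obj$ gives $\obj_A(j) \geq \obj_{\ground - j}(j)$, so $\submod_A(j) \geq 0$. Hence $\submod$ is monotone increasing.

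Finally, non-negativity follows immediately from monotonicity together with $\submod(\emptyset) = \obj(\emptyset) - \lin(\emptyset) = \obj(\emptyset) \geq 0$, where the last inequality uses that $\obj$ is a nonnegative function. Since no step requires anything beyond bookkeeping with marginals and the submodularity of $\obj$, I do not foresee any obstacle; the only point worth stating carefully is the inclusion $A \subseteq \ground - j$ in the monotonicity step, which is what makes the identification of $\lin$ with the ``worst-case'' linear lower bound on $\obj$ work.
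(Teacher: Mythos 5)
Your proof is correct and follows essentially the same route as the paper's: submodularity comes from subtracting a linear function, monotonicity from the marginal computation $\submod_A(j) = \obj_A(j) - \obj_{\ground - j}(j) \ge 0$, and nonnegativity from $\submod(\emptyset) = \obj(\emptyset) \ge 0$ together with monotonicity. The only cosmetic difference is that you spell out the inclusion $A \subseteq \ground - j$ explicitly, which the paper leaves implicit.
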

\begin{proof}
The function $\submod$ is the sum of a submodular function $\obj$ and a linear function $\lin$, and so must be submodular.  For any set $A \subseteq \ground$ and element $j \not\in A$,
\[\submod_{A}(j) = \obj_{A}(j) - \obj_{\ground - j}(j) \ge 0\]
since $\obj$ is submodular.
Thus, $\submod$ is monotone increasing.  Finally, we note that $\submod(\emptyset) = \obj(\emptyset) - \lin(\emptyset) = \obj(\emptyset) \ge 0$ and so $\submod$ must be nonnegative.
\end{proof}

\begin{lemma}
\label{lem:supmod-app}
Let $\obj : 2^{\ground} \to \posreals$ be a monotone-decreasing supermodular function and define $\lin(A) = \sum_{j \in A}\obj_{\emptyset}(j)$ and $\submod(A) = -\lin(A) - \obj(\ground\!\setminus\!A)$.    Then, $\submod$ is submodular, monotone increasing, and nonnegative.
\end{lemma}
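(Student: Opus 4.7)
The plan is to verify the three claimed properties of $\submod$ in parallel to Lemma~\ref{lem:submod-app}, using a single computation of the marginal values. For $j\notin A$, writing $\ground\setminus A = (\ground\setminus(A+j))\cup\{j\}$ and using linearity of $\lin$ gives
$$\submod_A(j) \;=\; -\obj_\emptyset(j) \;+\; \bigl[\obj(\ground\setminus A) - \obj(\ground\setminus(A+j))\bigr] \;=\; \obj_{\ground\setminus(A+j)}(j)\;-\;\obj_\emptyset(j).$$
Each of the three desired properties then reduces to a short statement about this expression and the underlying supermodular function $\obj$.

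For submodularity, I would take $A\subseteq B$ with $j\notin B$. Then $\ground\setminus(B+j)\subseteq\ground\setminus(A+j)$, and supermodularity of $\obj$ (increasing marginals) yields $\obj_{\ground\setminus(A+j)}(j)\ge\obj_{\ground\setminus(B+j)}(j)$, hence $\submod_A(j)\ge\submod_B(j)$. A cleaner high-level rephrasing is that the map $A\mapsto-\obj(\ground\setminus A)$ is submodular precisely because $\obj$ is supermodular, and adding the linear function $-\lin$ preserves submodularity. For monotone increase, the same identity gives $\submod_A(j)=\obj_{\ground\setminus(A+j)}(j)-\obj_\emptyset(j)\ge 0$, since $\emptyset\subseteq\ground\setminus(A+j)$ and supermodularity of $\obj$ yields $\obj_S(j)\ge\obj_\emptyset(j)$ for every $S\supseteq\emptyset$.

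The main obstacle is nonnegativity, where the parallel with Lemma~\ref{lem:submod-app} breaks down: evaluating at the empty set gives $\submod(\emptyset)=-\obj(\ground)$, which is a priori only nonpositive. To handle this I would invoke a normalization argument: the supermodular minimization problem in Theorem~\ref{thm:supermod-main} and the associated maximization of $\submod+\lin$ are invariant under additive shifts of $\obj$, so we may assume without loss of generality that $\obj(\ground)=0$ (replacing $\obj$ by $\obj-\obj(\ground)$, which remains monotone decreasing supermodular with the same marginals and curvature). Under this normalization $\submod(\emptyset)=0$, and the monotonicity already established then gives $\submod(A)\ge\submod(\emptyset)=0$ for every $A\subseteq\ground$.
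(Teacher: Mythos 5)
Your computation of $\submod_A(j) = \obj_{\ground\setminus(A+j)}(j) - \obj_\emptyset(j)$ matches the paper's, and your verifications of submodularity and monotonicity from this identity are the same as theirs. Where you diverge is on nonnegativity, and here you have actually caught a genuine gap in the paper's own proof. The paper asserts that $\submod(\emptyset) = -\lin(\emptyset) - \obj(\ground) = 0$, but since $\lin(\emptyset)=0$ this equals $-\obj(\ground)$, which is nonpositive rather than zero: the only thing the hypotheses give is $\obj(\ground)\ge 0$, and in general $\obj(\ground) > 0$ makes $\submod(\emptyset)<0$. So as literally stated, the lemma (and the paper's proof of it) require the unstated normalization $\obj(\ground)=0$.

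Your normalization argument is the right repair, and it is sound in the only place the lemma is used (Theorem~\ref{thm:supermod-main}): replacing $\obj$ by $\obj - \obj(\ground)$ preserves all marginals, hence preserves monotone decrease, supermodularity, nonnegativity (since $\obj(\ground)$ is the minimum value) and the curvature $c$; and since $\alpha\ge 1$ and $\obj(\ground)\ge 0$, an $\alpha$-approximation for the shifted function, $\obj(S)-\obj(\ground)\le\alpha(\obj(O)-\obj(\ground))$, implies $\obj(S)\le\alpha\obj(O)-(\alpha-1)\obj(\ground)\le\alpha\obj(O)$. One small stylistic quibble: saying the minimization problem is ``invariant'' under additive shifts is slightly too strong, since the approximation ratio is not literally preserved; what is true, and what your argument needs, is the one-directional implication just noted. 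With that phrasing tightened, your proof is correct and in fact more careful than the paper's.
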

\begin{proof}
We first show that $\submod$ is monotone-increasing.  Consider an arbitrary $A \subseteq \ground$ and $j \not\in A$.  Then,
\begin{align*}
\submod_{A}(j) &= \submod(A + j) - \submod(A) \\
&=  -\lin(A + j)  - \obj((\ground\!\setminus\!A) - j) + \lin(A) + \obj(\ground\!\setminus\!A) \\
&= -\lin(j) + \obj_{(\ground \setminus A) - j}(j) \\
&= -\obj_{\emptyset}(j) + \obj_{(\ground \setminus A) - j}(j)\\
&\ge 0,
\end{align*}
where the last line holds because $\obj$ is supermodular.  Moreover, note that $\submod(\emptyset) = -\ell(\emptyset) - \obj(\ground) = 0$, so $\submod$ is nonnegative.  

Finally, we show that $\submod$ is submodular.  Suppose $A \subseteq B$ and $j \not\in B$.  Then, $(\ground \!\setminus\! B) - j \subseteq (\ground \!\setminus\! A) - j$ and so, since $\obj$ is supermodular, $\obj_{(\ground \setminus B) - j}(j) \le \obj_{(\ground \setminus A) - j}(j)$.  Thus,
\begin{equation*}
\submod_{A}(j) = -\obj_{\emptyset}(j) + \obj_{(\ground \setminus A) - j}(j) \\
\ge -\obj_{\emptyset}(j) + \obj_{(\ground \setminus B) - j}(j) = \submod_{B}(j).
\end{equation*}
\end{proof}

\begin{lemma}
\label{lem:submod-inapprox-curv}
Let $f$ be a monotone increasing submodular function, satisfying $f_{A}(j) \le p$ for all $j, A$, and let $c \in [0,1]$.  Define $$\hat{f}(A) = f(A) + \frac{1 - c}{c}\cdot|A|\cdot p.$$  Then, $\hat{f}$ is submodular, monotone increasing, and nonnegative, and has curvature at most $c$.
\end{lemma}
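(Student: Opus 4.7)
The plan is to verify each of the four claimed properties separately, since $\hat{f}$ is a simple additive modification of $f$ and the arguments should all be quick marginal-value computations. First I would compute the marginal $\hat{f}_A(j)$ for arbitrary $A \subseteq \ground$ and $j \notin A$, which from the definition equals $f_A(j) + \frac{1-c}{c} p$. Every subsequent claim reduces to a bound on this expression.

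For submodularity, the additional term $\frac{1-c}{c}|A|p$ is a linear (and hence modular) function of $A$; since $f$ is submodular by hypothesis and submodularity is preserved under adding a modular function, $\hat{f}$ is submodular. For monotonicity, we have $\hat{f}_A(j) = f_A(j) + \frac{1-c}{c}p \ge 0$ because $f_A(j) \ge 0$ (by monotonicity of $f$) and $\frac{1-c}{c}p \ge 0$ for $c \in [0,1]$. Nonnegativity then follows by noting that $\hat{f}(\emptyset) = f(\emptyset) \ge 0$ and applying monotonicity.

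The only substantive verification is the curvature bound, which is the step I would treat most carefully, although it remains elementary. I would lower bound the numerator $\hat{f}_{\ground - j}(j) = f_{\ground - j}(j) + \frac{1-c}{c}p \ge \frac{1-c}{c}p$, using only monotonicity of $f$ to drop the first term. I would then upper bound the denominator $\hat{f}_{\emptyset}(j) = f_{\emptyset}(j) + \frac{1-c}{c}p \le p + \frac{1-c}{c}p = \frac{1}{c}p$, using the standing assumption $f_A(j) \le p$ on the first term. Taking the ratio gives
\[
\frac{\hat{f}_{\ground - j}(j)}{\hat{f}_{\emptyset}(j)} \ge \frac{(1-c)p/c}{p/c} = 1 - c,
\]
which by definition \eqref{eq:curvature} means $\hat{f}$ has curvature at most $c$. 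No step here looks like a real obstacle; the only mild subtlety is keeping track of the signs and the direction of the inequality in the curvature bound, and making sure the universal bound $f_A(j) \le p$ is applied to the denominator (large) rather than the numerator (small).
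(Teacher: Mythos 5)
Your proposal is correct and follows essentially the same route as the paper: decompose $\hat f$ as $f$ plus a nonnegative modular term, observe that all structural properties are preserved under such an addition, and then bound the curvature ratio by dropping $f_{\ground-j}(j)\ge 0$ from the numerator and using $f_\emptyset(j)\le p$ in the denominator. The only cosmetic difference is that the paper states the structural claims in one sentence rather than tracing each through the marginal $\hat f_A(j)$, but the underlying computation is identical.
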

\begin{proof}
Because $\hat{f}$ is the sum of a monotone increasing, nonnegative submodular function and a nonnegative linear function, and hence must be monotone increasing, nonnegative, and submodular.  Furthermore, for any $A \subseteq \ground$ and $j \not\in A$, we have $\hat{f}_{A}(j) = f_{A}(j) + \frac{1 - c}{c}p$.  Thus,
\begin{equation*}
\frac{\hat{f}_{\ground - j}(j)}{\hat{f}_{\emptyset}(j)} = \frac{f_{\ground - j}(j) + \frac{1 - c}{c}\cdot p}{f_{\emptyset}(j) + \frac{1 - c}{c}\cdot p}
\ge \frac{\frac{1 - c}{c}\cdot p}{p + \frac{1 - c}{c}\cdot p} = \frac{\frac{1-c}{c}}{\frac{1}{c}} = 1 - c,
\end{equation*}
and so $\hat{f}$ has curvature at most $c$.
\end{proof}

\begin{lemma}
\label{lem:supmod-inapprox-curv}
Let $f$ be a monotone increasing submodular function, satisfying $f_{A}(j) \le p$ for all $j, A$, and let $c \in [0,1]$.  Define:
$$\hat{f}(A) = \frac{p}{c}\cdot |\ground \!\setminus\! A| - f(\ground \!\setminus\! A).$$  Then, $\hat{f}$ is submodular, monotone increasing, and nonnegative, and has curvature at most $c$.
\end{lemma}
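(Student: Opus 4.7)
The plan is to verify each of the four asserted properties of $\hat f$ --- submodularity, monotonicity, nonnegativity, and curvature at most $c$ --- by reducing each to a corresponding property of $f$ on the complementary lattice. The central computation, which I would do first, is the explicit marginal
\[
\hat f_A(j)
\;=\; \hat f(A+j)-\hat f(A)
\;=\; -\tfrac{p}{c} + \bigl[f(X\setminus A) - f((X\setminus A) - j)\bigr]
\;=\; -\tfrac{p}{c}+f_{(X\setminus A)-j}(j),
\]
valid for any $j\notin A$. This identity exposes $\hat f$ as the sum of an affine piece in $A$ and a composition of $f$ with the complementation map $A\mapsto X\setminus A$, so each structural property of $\hat f$ translates into a corresponding property of $f$ evaluated on complements.

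For submodularity I would apply the submodular inequality for $f$ to the pair of nested sets $(X\setminus B)-j \subseteq (X\setminus A)-j$ arising from $A\subseteq B\subseteq X-j$, and substitute the resulting relation into the marginal formula. For monotonicity I would insert the hypothesis $0 \le f_B(k) \le p$ into the marginal formula and use $c\in[0,1]$ to pin down the sign of $\hat f_A(j)$. For nonnegativity I would observe that $\hat f(X)=0$ since both terms vanish, and then telescope along a chain from $A$ up to $X$, using the marginals just bounded, to control the sign of $\hat f(A)$ on all of $2^X$.

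For the curvature bound, I would specialize the marginal identity to $A=\emptyset$ and $A=X-j$ to obtain
\[
\hat f_\emptyset(j) = -\tfrac{p}{c}+f_{X-j}(j), \qquad \hat f_{X-j}(j) = -\tfrac{p}{c}+f_\emptyset(j),
\]
and then estimate the ratio in the definition \eqref{eq:curvature} using $0\le f_B(k)\le p$, in exact parallel with the one-line computation in the proof of Lemma \ref{lem:submod-inapprox-curv}: both numerator and denominator are of the form $-p/c$ plus a quantity in $[0,p]$, so the ratio reduces to a pair of linear bounds that force it to be at least $1-c$ (in absolute value).

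The main obstacle I expect is the careful bookkeeping of signs and directions of inequalities introduced by the complementation $A\mapsto X\setminus A$: subset inclusions reverse direction, and the outer negative sign in front of $f(X\setminus A)$ then flips the reversed inequality back. Once this dualization is tracked consistently through the marginal formula, the four verifications should reduce to routine algebra, exactly paralleling the proof of the ``complementary'' auxiliary Lemma \ref{lem:supmod-app} and the curvature computation in Lemma \ref{lem:submod-inapprox-curv}.
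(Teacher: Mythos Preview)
Your approach---compute the marginal $\hat f_A(j) = -p/c + f_{(X\setminus A)-j}(j)$ and read off each property from the bounds $0 \le f_B(k) \le p$ and the submodularity of $f$---is exactly the paper's.

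There is one point to flag, though it is really a defect in the lemma statement rather than in your plan. Carry out your own ``submodularity'' step: for $A\subseteq B$ you have $(X\setminus B)-j \subseteq (X\setminus A)-j$, so submodularity of $f$ gives $f_{(X\setminus B)-j}(j) \ge f_{(X\setminus A)-j}(j)$ and hence $\hat f_B(j) \ge \hat f_A(j)$. That is \emph{supermodularity}, not submodularity. Likewise, from $f_{(X\setminus A)-j}(j) \le p$ and $c\in(0,1]$ your marginal bound becomes $\hat f_A(j) \le -p/c + p = -(1-c)p/c \le 0$, so $\hat f$ is monotone \emph{decreasing}. The paper's own proof establishes exactly these properties (supermodular, decreasing), and Theorem~\ref{lem:supmod-inapprox} applies the lemma with $\hat f$ as a monotone decreasing supermodular function; the words ``submodular, monotone increasing'' in the lemma statement are simply a typo. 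Your nonnegativity argument (telescope down from $\hat f(X)=0$ using nonpositive marginals) and your curvature computation both go through as written and coincide with the paper's.
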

\begin{proof}
Because $f$ is submodular, so is $f(\ground \setminus A)$, and hence $-f(\ground \setminus A)$ is supermodular.  Thus, $\hat{f}$ is the sum of a supermodular function and a linear function and so is supermodular.  In order to see that $\hat{f}$ is decreasing, we consider the marginal $\hat{f}_{A}(j)$, which is equal to
\begin{equation*}
\frac{p}{c}\!\cdot\!|\ground\!\setminus\!(A\!+\!j)| - f(\ground\!\setminus\!(A\!+\!j)) - \frac{p}{c}\!\cdot\!|\ground\!\setminus\!A| + f(\ground\!\setminus\!A)
= -\frac{p}{c} + f_{\ground \setminus (A + j)}(j) 
\le -\frac{p}{c} - p \le 0.
\end{equation*}
Additionally, we note that $\hat{f}(\ground) = 0$, and so $\hat{f}$ must be nonnegative.  Finally, we show that $\hat{f}$ has curvature at most $c$.  We have:
\begin{align*}
\hat{f}_{\emptyset}(j) &= -\frac{p}{c} + f_{\ground - j}(j) \ge -\frac{p}{c} \\
\hat{f}_{\ground - j}(j) &= -\frac{p}{c} + f_{\emptyset}(j) \le -\frac{p}{c} + p = -\frac{1-c}{c}p,
\end{align*}
and therefore $\hat{f}_{\ground - j}(j)/\hat{f}_{\emptyset}(j) \le 1-c$.
\end{proof}
\end{document}